\documentclass{article}
\usepackage[preprint]{neurips_2026}
\usepackage{amsmath}
\usepackage{amsthm}
\usepackage[utf8]{inputenc} 
\usepackage[T1]{fontenc}    
\usepackage{hyperref}       
\usepackage{url}            
\usepackage{booktabs}       
\usepackage{amsfonts}       
\usepackage{nicefrac}       
\usepackage{microtype}      
\usepackage{xcolor}         
\usepackage{graphicx}
\newtheorem{remark}{Remark}

\newtheorem{theorem}{Theorem}

\title{Geometric Algebra Meets Cartesian Tensors: Higher-Order Equivariance for Interatomic Potentials}

\author{%
  \small 
  \begin{tabular}{c@{\hskip 0.6in}c}
    \textbf{\normalsize Can Polat} & \textbf{\normalsize Erchin Serpedin} \\
    \footnotesize Dept. of Electrical \& Computer Eng. & \footnotesize Dept. of Electrical \& Computer Eng. \\
    \footnotesize Texas A\&M Univ. & \footnotesize Texas A\&M Univ. \\
    \footnotesize College Station, TX, USA & \footnotesize College Station, TX, USA \\
    \texttt{\footnotesize can.polat@tamu.edu} & \texttt{\footnotesize eserpedin@tamu.edu} \\
    \\[0.05in] 
    \textbf{\normalsize Mustafa Kurban}\thanks{Corresponding author.} & \textbf{\normalsize Hasan Kurban}\footnotemark[1] \\
    \footnotesize Dept. of Prosthetics \& Orthotics & \footnotesize College of Science \& Eng. \\
    \footnotesize Ankara Univ., Ankara, Turkey & \footnotesize Hamad Bin Khalifa Univ. \\
    \footnotesize \emph{and} Texas A\&M Univ. at Qatar & \footnotesize Doha, Qatar \\
    \footnotesize Doha, Qatar & \texttt{\footnotesize hkurban@hbku.edu.qa} \\
    \texttt{\footnotesize kurbanm@ankara.edu.tr} &
  \end{tabular}
}

\begin{document}

\maketitle

\begin{abstract}
$\mathrm{Cl}(3,0)$ interatomic potentials, despite their algebraic elegance, predict force magnitudes accurately but force directions poorly: across ten rMD17 molecules, every $L \leq 1$ baseline in our twelve-model study attains aggregate force-cosine similarity below $0.25$. The cause is structural. The geometric product of two vectors in $\mathbb{R}^3$ realises only the $L{=}0$ and $L{=}1$ pieces of its irrep content, leaving the symmetric-traceless rank-2 component absent from the per-edge bilinear that drives every message-passing layer. We close the gap with \textbf{CliffordSTF}, which couples the Clifford multivector to closed-form symmetric-traceless tensor tracks at ranks two and three through bilinear cross-track contractions, using a single learned bilinear and no Clebsch--Gordan tables, Wigner-$D$ matrices, or e3nn calls. On rMD17, CliffordSTF raises aggregate force-cosine similarity from $0.055$ (base Clifford) to $0.551$, an order-of-magnitude relative directional gain at improved magnitude accuracy (force MAE $15.8\%$ lower; energy MAE $10.9\%$ lower), and outperforms every CG-free or body-ordered baseline we evaluate (all $\leq 0.17$). On catalysis, CliffordSTF attains the best out-of-distribution S2EF energy MAE on OC22 in our study and the best in-distribution energy MAE among $L \geq 2$ methods on OC22 IS2RE. An eleven-variant ablation confirms the two tracks are empirically complementary: neither alone approaches the hybrid.
\end{abstract}

\section{Introduction}\label{sec:intro}

\begin{figure}[!ht]
    \centering
    \includegraphics[width=1\linewidth]{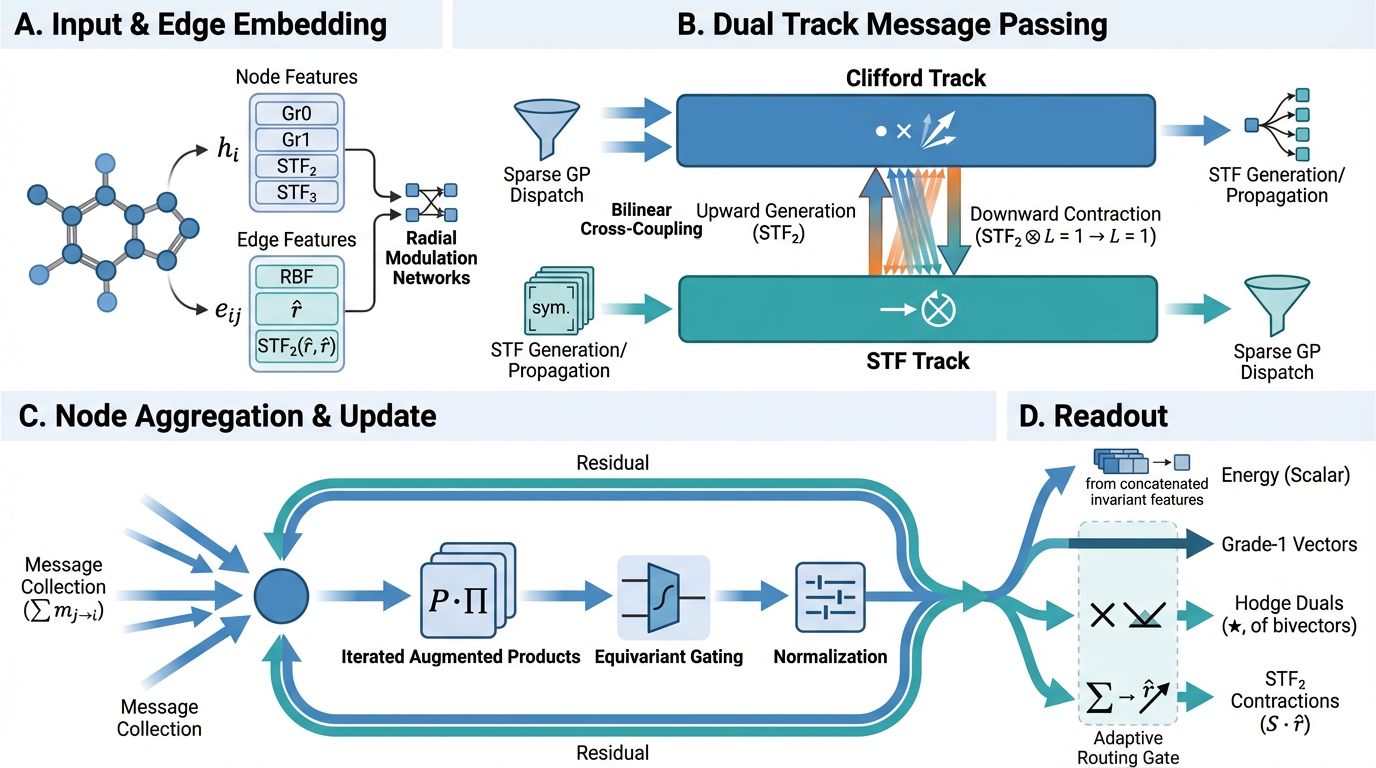}
    \caption{\textbf{Schematic of the CliffordSTF pipeline}. (A) Atoms and edges are embedded with multivector and STF features, including explicit $L{=}2$ edge descriptors. (B) The core message-passing block uses a dual-track design where the Clifford track (geometric products) and STF track are coupled via bilinear generation and contraction operations. (C) Node updates incorporate iterated augmented products for multi-body correlations followed by equivariant gating. (D) Energy and forces are predicted from a combination of invariant and vector sources, including $L{=}2$ contractions ($S \cdot \hat{r}$) that capture angular force components inaccessible to $L \leq 1$ features.}
    \label{fig:architecture}
\end{figure}

Machine learning interatomic potentials (MLIPs) have emerged as practical surrogates for first-principles electronic structure calculations, enabling molecular dynamics at scales inaccessible to density functional theory \citep{unke2021machine}. A central design challenge is encoding the symmetries of physical law: energies must be invariant and forces equivariant under rotations and translations. Force prediction, in particular, decomposes into a magnitude component and a directional component, and directional accuracy governs whether trajectories stay on the correct potential-energy basin even when scalar errors look small. The dominant approach to building such equivariant features constructs them from spherical harmonics composed through tensor products weighted by Clebsch--Gordan (CG) coefficients \citep{thomas2018tensor,batzner2022nequip,batatia2022mace}, but the cost of high-order couplings scales steeply with the maximum angular momentum order $L$, motivating both efficient implementations \citep{geiger2022e3nn,passaro2023escn} and CG-free reformulations based on irreducible Cartesian tensors \citep{simeon2023tensornet,zaverkin2024ictp,wang2024n}.

Clifford (geometric) algebra offers a third route. Architectures in this lineage embed vectors into the eight-dimensional algebra $\mathrm{Cl}(3,0)$ and replace learned tensor products with the single closed-form geometric product, which acts uniformly on scalars, vectors, bivectors, and pseudoscalars while preserving equivariance by construction \citep{ruhe2023cgenn,ruhe2023gcan,brehmer2023geometric,brandstetter2022clifford}. The substrate is appealing: it removes the CG bookkeeping, fuses grades into a single carrier, and admits a clean algebraic semantics. We argue, however, that it carries a structural expressivity gap that has not been named in prior work. The geometric product of two vectors in $\mathbb{R}^3$ produces a scalar and a bivector. In the language of $\mathrm{SO}(3)$ irreducible representations, these realise only the $L{=}0$ and $L{=}1$ pieces of the decomposition $\mathbf{1} \otimes \mathbf{1} = \mathbf{0} \oplus \mathbf{1} \oplus \mathbf{2}$; the five-dimensional symmetric-traceless $L{=}2$ piece sits outside the algebra entirely. Quadrupolar charge distributions, $d$-orbital interactions, and anisotropic stress all live naturally at $L{=}2$, and a Clifford backbone must therefore reconstruct them indirectly through repeated products and stacked layers, ceding ground to spherical-harmonic architectures whenever $L{\geq}2$ structure carries real signal.

\textbf{CliffordSTF} closes this gap by joining the two CG-free lineages reviewed above. The Cartesian-tensor side supplies closed-form symmetric-traceless tensor (STF) tracks at ranks two and three; the Clifford side supplies the multivector substrate; the two are coupled through bilinear cross-track contractions whose only learned component is the geometric product itself, implemented as an $8 \times 8$ Cayley table. Our contributions are:
\begin{itemize}
    \item \textbf{Diagnosis.} A controlled ablation isolates the missing $L \geq 2$ substrate, rather than capacity, depth, or readout choice, as the dominant driver of the directional failure mode shared across $L \leq 1$ baselines on rMD17.
    \item \textbf{Architecture.} We give a closed-form construction whose per-edge bilinear spans the $\mathrm{SO}(3)$-irrep content of spherical harmonics through $L{=}3$. The construction uses one learned bilinear together with classical symmetric-traceless contractions, with each induced map verified against Wigner-$3j$ symbols.
    \item \textbf{Empirics.} CliffordSTF outperforms every CG-free baseline we evaluate on rMD17 force-cosine similarity, attains the best in- and out-of-distribution energy MAE on OC22 S2EF in our study, and achieves the best in-distribution energy MAE among $L \geq 2$ methods on OC22 IS2RE. An eleven-variant ablation shows the two tracks are empirically complementary: neither alone approaches the hybrid.
\end{itemize}
\section{Related Work}\label{sec:related}

\paragraph{Equivariant GNNs for atomistic modelling.}
Invariant architectures such as SchNet~\citep{schutt2017schnet} and DimeNet++~\citep{gasteiger2020dimenetpp} encode atomic environments through radial and angular features but discard directional information at the representation level. Equivariant models restore this by constructing features that transform as irreducible representations of $\mathrm{SO}(3)$: NequIP~\citep{batzner2022nequip} propagates tensor field features through learned Clebsch--Gordan tensor products, MACE~\citep{batatia2022mace} achieves data efficiency via higher-order symmetric contractions, and PaiNN~\citep{schutt2021painn}, TorchMD-NET~\citep{tholke2022torchmdnet}, and ViSNet~\citep{wang2024visnet} operate on equivariant vector features ($L{=}1$) with favourable speed but limited angular resolution. EquiformerV2~\citep{liao2024equiformerv2} combines equivariant attention with eSCN convolutions~\citep{passaro2023escn} supporting tensor products up to $L{=}6$ at substantial computational cost. FAENet~\citep{duval2023faenet} achieves equivariance through stochastic frame averaging rather than architectural symmetry constraints, predicting forces directly from a learned head rather than via energy gradients. All CG-based methods share a fundamental bottleneck: tensor product cost grows combinatorially with $L$.

\paragraph{CG-free equivariant architectures.}
Several approaches achieve equivariant tensor interactions without Clebsch--Gordan coefficients. ICTP~\citep{zaverkin2024ictp} constructs products of irreducible Cartesian tensors (symmetric traceless tensors) within a MACE-like architecture. GotenNet~\citep{aykent2025gotennet} uses geometry-aware tensor attention via inner products on high-degree steerable features. Gaunt tensor products~\citep{luo2024gaunt} exploit spherical-harmonic addition theorems to reduce CG complexity from $\mathcal{O}(L^6)$ to $\mathcal{O}(L^3)$. TensorNet~\citep{simeon2023tensornet} replaces CG products with rank-2 Cartesian tensor operations, CACE~\citep{cheng2024cace} reformulates the atomic cluster expansion in Cartesian coordinates, and HotPP~\citep{wang2024n} passes reducible Cartesian tensor messages of arbitrary rank. The mathematical foundation underlying this entire line of work is the classical equivalence between symmetric traceless Cartesian tensors and $\mathrm{SO}(3)$ spherical irreps \citep{coope1965irreducible,applequist1989traceless,thorne1980multipole,stone2013theory}, recently placed on rigorous algorithmic footing via an orthonormal change-of-basis construction up to rank 9~\citep{shao2025orthonormal}. Our Theorem~\ref{thm:cg_completeness} operationalises this equivalence at the architectural level through the specific combination of the $\mathrm{Cl}(3,0)$ geometric product with closed-form STF operations. CliffordSTF shares the STF primitive with ICTP but differs in a substantive way: ICTP replaces CG products with a single Cartesian substrate, while CliffordSTF runs the Clifford multivector and STF tracks in parallel and couples them through a bidirectional cross-track bilinear. This two-substrate design is architecturally distinct, and our ablation shows the cross-track coupling is essential for end-to-end training stability, with its removal approximately doubling both the mean and standard deviation of energy MAE across seeds (Supplementary~\ref{supp:ictp}).

\paragraph{Clifford algebra in equivariant networks.}
Clifford algebra has been applied to equivariant networks across several domains: geometric Clifford algebra networks~\citep{ruhe2023gcan} introduced $\mathrm{Pin}(p,q,r)$ group layers, Clifford group equivariant neural networks~\citep{ruhe2023cgenn} extended equivariance to the full Clifford group, Clifford neural layers~\citep{brandstetter2022clifford} applied multivector feature maps to PDE surrogates, GATr~\citep{brehmer2023geometric} built a transformer over the projective algebra $\mathrm{Cl}(3,0,1)$, Clifford-steerable CNNs~\citep{zhdanov2024clifford} construct equivariant kernels on pseudo-Riemannian manifolds, and CG-EGNN~\citep{tran2024cgegnn} combined Clifford layers with high-order message passing. The structural gap identified in Section~\ref{sec:intro}, namely that the geometric product on $\mathbb{R}^3$ vectors leaves $L \geq 2$ inaccessible at the per-edge bilinear regardless of depth, is not, to our knowledge, named in any of these works. Two clarifications guard against likely misreadings. First, higher algebraic \emph{grade} in $\mathrm{Cl}(3,0)$ does not correspond to higher angular momentum $L$: grade-2 bivectors transform as $L{=}1$ pseudovectors under $\mathrm{SO}(3)$, not $L{=}2$. Second, the same deficiency carries over to the projective algebra $\mathrm{Cl}(3,0,1)$ used in GATr, where the additional null basis encodes translational rather than angular content and the $\mathrm{SO}(3)$ decomposition remains $2 \cdot L{=}0 \oplus 2 \cdot L{=}1$ (Supplementary~\ref{supp:gatr}).

\paragraph{Force directional fidelity as a metric.}
Standard magnitude metrics can obscure the representational differences above: low force MAE does not reliably predict simulation quality~\citep{fu2023forces}, and hyperparameters optimising force accuracy can degrade energy accuracy~\citep{kovacs2023mace}. Force cosine similarity, part of the OC20 benchmark since its introduction but not standardly reported on rMD17, exposes the gap directly. We bring it to small-molecule evaluation and observe a representational stratification consistent with our theoretical framing: $L \leq 1$ direct-force architectures cluster at poor directional alignment (all below $0.15$ at matched budget), CG-free $L \geq 2$ architectures (MACE, ICTP) improve only marginally without their recommended hyperparameters, and only fully spherical-harmonic $L \geq 2$ architectures (EquiformerV2) realise the high-fidelity regime. CliffordSTF closes most of this gap without spherical harmonics.
\section{CliffordSTF}\label{sec:method}

\subsection{The \texorpdfstring{$L \leq 1$}{L <= 1} Limitation at the Bilinear Level}\label{sec:limitation}

The diagnosis of Section~\ref{sec:intro} can be made precise at the algebraic level. Under $\mathrm{SO}(3)$ the graded components of $\mathrm{Cl}(3,0)$ decompose as $2\cdot L{=}0 \oplus 2\cdot L{=}1$: grade~0 and grade~3 carry $L{=}0$, while grade~1 and the Hodge dual of grade~2 carry $L{=}1$. No grade contains the 5-dimensional $L{=}2$ irrep \citep{doran2003geometric,lounesto2001clifford} (see Supplementary~\ref{supp:prelim}). Because the geometric product is closed within the algebra, $L \geq 2$ content can arise only nonlinearly through stacked products; Section~\ref{sec:stf_channels} supplies it directly through closed-form maps that span $\mathbf{1}\otimes\mathbf{1}$ at the per-edge bilinear.

\begin{remark}[Equivariance scope]\label{rem:so3_vs_o3}
All equivariance claims are $\mathrm{SO}(3)$. Under $\mathrm{SO}(3)$, polar vectors (grade~1) and axial vectors (Hodge of grade~2) transform identically. Under $\mathrm{O}(3)$ they carry parity $-1$ and $+1$ respectively; full $\mathrm{O}(3)$ equivariance would require explicit parity tracking (Supplementary~\ref{supp:so3_o3}).
\end{remark}

\subsection{Symmetric Traceless Tensor Channels}\label{sec:stf_channels}

A rank-$\ell$ symmetric traceless tensor in $\mathbb{R}^3$ has $2\ell+1$ independent components and transforms as the $L{=}\ell$ irrep of $\mathrm{SO}(3)$ \citep{coope1965irreducible,applequist1989traceless,stone2013theory,shao2025orthonormal}. We use two:
\[
\mathrm{STF}_2(u,v)_{ij} = \tfrac{1}{2}(u_iv_j + u_jv_i) - \tfrac{1}{3}(u \cdot v)\,\delta_{ij}, \qquad
\mathrm{STF}_3(S,v)_{ijk} = T^{\text{sym}}_{ijk} - \tfrac{1}{5}\bigl(\delta_{ij}A_k + \delta_{ik}A_j + \delta_{jk}A_i\bigr),
\]
where $S$ is symmetric traceless, $T^{\text{sym}}_{ijk} = \tfrac{1}{3}(S_{ij}v_k + S_{ik}v_j + S_{jk}v_i)$, and $A_k = \tfrac{2}{3}S_{kl}v_l$. Coefficient derivations and DOF checks are in Supplementary~\ref{supp:stf_coefficients}. STF$_2$ is stored as the 5 components $[S_{xx}, S_{xy}, S_{xz}, S_{yy}, S_{yz}]$ with $S_{zz}=-S_{xx}-S_{yy}$ implicit; STF$_3$ is stored as 7 independent components.

\textbf{Constructive Clebsch--Gordan equivalence.} The equivalence between symmetric traceless Cartesian tensors and spherical-harmonic irreps of $\mathrm{SO}(3)$ is classical and has been placed on rigorous algorithmic footing via orthonormal change-of-basis matrices up to rank 9. Our contribution is architectural: we identify the specific combination of the $\mathrm{Cl}(3,0)$ geometric product with closed-form STF operations as sufficient to span the full Clebsch--Gordan decomposition for irrep tensor products through $L{=}3$ at the per-edge bilinear.

\begin{theorem}[Constructive CG recovery via geometric product and STF operations]\label{thm:cg_completeness}
Let $u, v \in \mathbb{R}^3$ carry the $L{=}1$ irrep of $\mathrm{SO}(3)$. The Clebsch--Gordan decomposition $\mathbf{1} \otimes \mathbf{1} = \mathbf{0} \oplus \mathbf{1} \oplus \mathbf{2}$ is recovered in closed form by
\[
[u \otimes v]_{L=0} = \langle uv \rangle_0, \qquad
[u \otimes v]_{L=1} = \star\langle uv \rangle_2, \qquad
[u \otimes v]_{L=2} = \mathrm{STF}_2(u,v),
\]
where $\langle \cdot \rangle_k$ extracts grade~$k$ of the $\mathrm{Cl}(3,0)$ geometric product $uv$, $\star$ is the Hodge dual, and $\mathrm{STF}_2$ is the classical symmetric traceless projector. The three components are in bijective correspondence with the irrep components of $u_iv_j$ under the canonical embedding $u_iv_j = \tfrac{1}{3}\delta_{ij}(u\cdot v) + \tfrac{1}{2}\varepsilon_{ijk}(\star\langle uv \rangle_2)_k + \mathrm{STF}_2(u,v)_{ij}$, and each map is $\mathrm{SO}(3)$-equivariant.
\end{theorem}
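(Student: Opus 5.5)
The plan is to reduce everything to the elementary decomposition of the rank-2 tensor $M_{ij}=u_iv_j$ into trace, antisymmetric and symmetric-traceless parts, and then identify each part with the stated Clifford or STF quantity. First, expand the geometric product in the orthonormal basis $\{e_1,e_2,e_3\}$ of $\mathrm{Cl}(3,0)$. Since $e_ie_i=1$ and $e_ie_j=-e_je_i$ for $i\neq j$, we get
\[
uv=\sum_{i,j}u_iv_j\,e_ie_j=(u\cdot v)+\sum_{i<j}(u_iv_j-u_jv_i)\,e_ie_j .
\]
Hence $\langle uv\rangle_0=u\cdot v$ and $\langle uv\rangle_2=u\wedge v$. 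Next, fix the Hodge dual as $\star B=-IB$ (equivalently $B I^{-1}$) with $I=e_1e_2e_3$. Checking on basis bivectors ($e_2e_3\mapsto e_1$, and cyclically) should give $\star\langle uv\rangle_2=u\times v$, that is, $(\star\langle uv\rangle_2)_k=\varepsilon_{kij}u_iv_j$. Pinning down this sign convention so that the stated embedding holds with a plus sign is the one place I expect to need care. If the paper's $\star$ has the opposite sign, the identity still holds after absorbing a $-1$ into the definition, and I would state the convention explicitly.

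Second, verify the embedding identity componentwise. Write $u_iv_j=\tfrac13\delta_{ij}(u\cdot v)+\tfrac12(u_iv_j-u_jv_i)+\bigl[\tfrac12(u_iv_j+u_jv_i)-\tfrac13\delta_{ij}(u\cdot v)\bigr]$. This is trivially an identity. The last bracket is $\mathrm{STF}_2(u,v)_{ij}$ by the definition in Section~\ref{sec:stf_channels}. For the middle term, use the contraction $\varepsilon_{ijk}\varepsilon_{kab}=\delta_{ia}\delta_{jb}-\delta_{ib}\delta_{ja}$ to get $\tfrac12\varepsilon_{ijk}(u\times v)_k=\tfrac12(u_iv_j-u_jv_i)$.

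The three summands lie in the mutually orthogonal subspaces of $\mathbb{R}^{3\times3}$ consisting of pure-trace, antisymmetric, and symmetric-traceless matrices, which have dimensions $1+3+5=9$. The decomposition is therefore unique. Each summand is recovered from $M$ by a linear projector (trace, antisymmetrisation, symmetric-traceless projection), and conversely $M$ is recovered from the triple $(u\cdot v,\,u\times v,\,\mathrm{STF}_2(u,v))$ through the displayed embedding. This gives the claimed bijective correspondence between the three components and the irrep components of $u_iv_j$. Irreducibility of the three blocks is the classical fact cited before Theorem~\ref{thm:cg_completeness}: a rank-$\ell$ STF tensor carries $L{=}\ell$, so the blocks realise $L=0,1,2$ respectively.

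Third, prove equivariance. For $R\in\mathrm{SO}(3)$, choose a rotor $\mathcal R$ with $Ru=\mathcal R u\tilde{\mathcal R}$. Then $(Ru)(Rv)=\mathcal R\,uv\,\tilde{\mathcal R}$. Rotor conjugation preserves grades, so $\langle\cdot\rangle_0$ is invariant and $\langle\cdot\rangle_2$ is equivariant. The pseudoscalar $I$ is central and satisfies $\mathcal R I\tilde{\mathcal R}=I$, which is where $\det R=+1$ is used (cf.\ Remark~\ref{rem:so3_vs_o3}). Consequently $\star$ commutes with conjugation, giving $\star\langle (Ru)(Rv)\rangle_2=R(u\times v)$. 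For $\mathrm{STF}_2$, direct substitution gives $\mathrm{STF}_2(Ru,Rv)=R\,\mathrm{STF}_2(u,v)R^\top$, because $\delta_{ij}$ and $u\cdot v$ are rotation-invariant. Expressed in spherical components, each map then intertwines the corresponding Wigner-$D$ action. The remaining steps are routine index calculations.
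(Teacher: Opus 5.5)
Your proposal is correct and follows essentially the same route as the paper's proof: you split $u_iv_j$ into trace, antisymmetric and symmetric-traceless parts, identify the antisymmetric part with $\tfrac12\varepsilon_{ijk}(u\times v)_k$ and $u\times v$ with $\star(u\wedge v)$, and inherit equivariance from the geometric product, Hodge star and detracing; you add more detail than the sketch (the $1+3+5$ uniqueness count and the explicit rotor argument). Your sign worry is moot, since the paper fixes $\star\alpha=\alpha I^{-1}$ with $\star e_{23}=+e_1$, $\star e_{13}=-e_2$, $\star e_{12}=+e_3$, which is exactly your convention, so the embedding holds with the plus sign as stated.
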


\begin{proof}[Proof sketch]
Expand $u_iv_j = \tfrac{1}{3}\delta_{ij}(u\cdot v) + \tfrac{1}{2}(u_iv_j - u_jv_i) + \mathrm{STF}_2(u,v)_{ij}$. The antisymmetric part equals $\tfrac{1}{2}\varepsilon_{ijk}(u\times v)_k$, and $u\times v = \star(u\wedge v)$ under the Hodge isomorphism $\Lambda^2\mathbb{R}^3 \cong \mathbb{R}^3$. Equivariance follows from $\mathrm{SO}(3)$-equivariance of the geometric product, the Hodge star, and symmetric/trace-removal operations. Full coordinates in Supplementary~\ref{supp:cg_proof}.
\end{proof}

Higher-order couplings ($\mathbf{2}\otimes\mathbf{1}$, $\mathbf{2}\otimes\mathbf{2}$, $\mathbf{3}\otimes\mathbf{1}$, with the $L{=}4$ component of $\mathbf{2}\otimes\mathbf{2}$ outside our $L_{\max}{=}3$ cut) are recovered by analogous STF contractions, enumerated with explicit coordinates in Supplementary~\ref{supp:cg_proof}. The cut is set at $L_{\max}{=}3$ to balance new $L{\geq}2$ capacity against the matched $10^6$-parameter budget; an STF$_4$ extension is straightforward but unnecessary for closing the directional gap diagnosed in \S\ref{sec:limitation}. Each constructive map is a polynomial in its inputs with constant rational coefficients matching the classical detracing rule $1/(2\ell-1)$ in $d=3$. We verify each map against a direct Wigner $3j$-symbol reference to machine precision (Supplementary~\ref{supp:cg_test}), so the architectural realisation reproduces the orthonormal equivalence in the specific cases used by CliffordSTF.

\subsection{Architecture}\label{sec:architecture}

Figure~\ref{fig:architecture} summarises the pipeline. CliffordSTF carries two parallel feature substrates per atom: a Clifford multivector ($2 \cdot L{=}0 \oplus 2 \cdot L{=}1$, dimension 8) and optional symmetric-traceless tensor channels at ranks 2 and 3. We evaluate three configurations: \textbf{Base} (Clifford only, 8D), \textbf{+STF$_2$} (13D), and \textbf{+STF$_3$} (20D). With STF channels, Hodge, routing, and cross-track coupling all disabled, the Base configuration recovers a vanilla Clifford model bit-for-bit, ensuring controlled ablation. A per-layer grade schedule activates grades progressively, so early layers see grade-sparse geometric products and deeper layers see the full multivector. The only nonstandard primitives are the $8 \times 8$ Cayley table and the closed-form STF operations of \S\ref{sec:stf_channels}; everything else uses standard PyTorch.

\textbf{Embedding.} Atomic numbers map to scalar features via a learned embedding; STF channels are initialised with small Gaussian noise ($\sigma{=}10^{-3}$). Edges carry grade-0 RBF features, grade-1 direction features, and STF$_2$ features from $\mathrm{STF}_2(\hat{r}_{ij}, \hat{r}_{ij})$, all radially modulated, assembled into an edge multivector $\tilde{e}_{ij}$ in $\mathrm{Cl}(3,0)$. The tilde distinguishes this edge feature from algebra basis bivectors (Supplementary~\ref{supp:edge_embed}).

\textbf{Dual-track message passing.} Let $\mathrm{GP}$ denote the $\mathrm{Cl}(3,0)$ geometric product. The Clifford-track message is
\[
m^{\mathrm{Cl}}_{j \to i} = \alpha_{ij} \bigl[W_1 \cdot \mathrm{GP}(h_j^{\mathrm{Cl}}, \tilde{e}_{ij}) + W_2 \cdot \mathrm{GP}(\tilde{e}_{ij}, h_j^{\mathrm{Cl}}) + W_{\mathrm{skip}} \cdot h_j^{(0)}\bigr],
\]
with grade-sparse GP dispatch and an invariant scalar attention weight $\alpha_{ij}$ from multi-head dot-product attention on grade-0 features with an additive RBF bias. STF$_2$ messages combine generation from grade-1 features ($\mathrm{STF}_2(h_j^{\mathrm{vec}}, \hat{r}_{ij})$) with radially-modulated propagation of the sender's STF$_2$ features; STF$_3$ messages are analogous, generated via $\mathrm{STF}_3(h_j^{\mathrm{STF}_2}, \hat{r}_{ij})$.

\textbf{Cross-track coupling.} The sender's STF$_2$ features contract with its grade-1 vectors via $S \cdot v$, injecting $L{=}2$ information into the Clifford track:
\[
m^{\mathrm{Cl}}_{j \to i}[\text{grade-1}] \mathrel{+}= W_{\mathrm{cross}} \cdot \mathrm{contract}(h_j^{\mathrm{STF}_2}, h_j^{\mathrm{vec}}).
\]
This creates bidirectional flow: grade-1 vectors generate STF$_2$ (upward) while STF$_2$ feeds back into grade-1 through contraction (downward). At the many-body stage an augmented product couples all tracks through six bilinear terms, each corresponding bijectively to a CG channel in Theorem~\ref{thm:cg_completeness} and its higher-rank extensions (enumerated in Supplementary~\ref{supp:aug_prod}). Without cross-track coupling, STF parameters receive only weak invariant-norm gradients and degrade training; this is confirmed by the ablation (\S\ref{sec:results_scalar}).

\textbf{Update.} Iterated augmented products provide multi-body correlations up to body order 4. The combined output is then projected by a block-diagonal CliffordSTFLinear, equivariantly gated (SiLU on scalars, norm-gating on $L \geq 1$ \citep{weiler20183d}), passed through a residual connection, and normalised track-wise (Supplementary~\ref{supp:normalization}).

\textbf{Adaptive routing (optional).} A routing module applies per-atom multiplicative gates to the STF tracks between layers, reducing effective angular bandwidth for atoms in symmetric environments; the Clifford track is always active. Static (per-atom-type) and learned (MLP on invariant descriptors) variants are supported (Supplementary~\ref{supp:routing}).

\subsection{Readout}\label{sec:readout}

\textbf{Energy.} Per-atom energies are computed from invariant features: grade-0 scalars, GP self-interaction scalars (grade-0 component of $\mathrm{GP}(W h^{\mathrm{Cl}}, h^{\mathrm{Cl}})$), and per-channel Frobenius norms $\lVert h^{\mathrm{STF}_2}\rVert_F$ and $\lVert h^{\mathrm{STF}_3}\rVert_F$, concatenated and passed through an MLP. Per-layer readouts summed across interaction layers give a MACE-style multi-scale decomposition.

\textbf{Forces.} The per-atom head applies a linear projection over $2C$ vector channels obtained by concatenating grade-1 vectors and the Hodge duals of grade-2 bivectors (valid under $\mathrm{SO}(3)$; Remark~\ref{rem:so3_vs_o3}). This effectively doubles the force basis at no additional equivariant cost: the Hodge duals are a rearrangement of bivector features already present in the multivector. A per-edge head adds a scatter-summed STF$_2$ contribution $S_{ij}\hat{r}_{ij}$, magnitude-gated by an MLP on $\lVert h^{\mathrm{STF}_2}\rVert_F$. This explicit $L{=}2 \otimes L{=}1 \to L{=}1$ path captures angular force components inaccessible to $L \leq 1$ architectures.
\section{Experiments}\label{sec:experiments}

\subsection{Benchmarks and protocol}\label{sec:benchmarks}

We evaluate on five benchmarks across two tiers: three primary interatomic-potential benchmarks (\textbf{rMD17}, \textbf{OC20}, \textbf{OC22}) and two secondary scalar-property benchmarks (\textbf{QM9}, \textbf{Molecule3D}). We compare against twelve architecture families spanning invariant, vector-equivariant, tensor-field, body-ordered, frame-averaging, and Clifford-algebraic models. The Clifford-algebraic baseline is our base $\mathrm{Cl}(3,0)$, recovered from CliffordSTF by disabling all extensions, providing a controlled within-paper comparison. MACE, ICTP, and EquiformerV2 are additionally swept across $L_{\max}$ to trace angular resolution within a family. The full baseline list and per-family configurations are in Supplementary~\ref{supp:model_settings}. All models use a unified training framework at approximately $10^6$ parameters, Adam at learning rate $10^{-4}$, no weight decay, early stopping at patience 30, and no EMA or warm-up; per-benchmark training settings are in Supplementary~\ref{supp:protocol}. The protocol targets representational comparability at fixed capacity rather than per-architecture state-of-the-art, so several baselines do not reach their published accuracies under this fixed-budget regime; these cases are flagged in the tables and discussed in Supplementary~\ref{supp:results_caveats}. Metrics are energy MAE, force MAE, and force cosine similarity (units per benchmark in each table caption); all numbers are mean $\pm$ standard deviation across seeds.

\subsection{Force directional accuracy on rMD17}\label{sec:results_md17}

Force cosine similarity isolates directional fidelity from magnitude accuracy and is standard on OC20 but has not been previously reported on rMD17; we include it as our primary mechanistic diagnostic. Four models receive caveat marks in Table~\ref{tab:md17_fcos}: SchNet and DimeNet++ are strictly invariant (forces from energy gradients, marked gray), FAENet uses stochastic frame averaging rather than architectural equivariance ($\ast$), GotenNet is $L\!\geq\!2$ but did not converge under our fixed-budget protocol ($\ast$), and EquiformerV2 predicts absolute molecular energies so its rMD17 energy MAE is not comparable to the other rows ($\dagger$). Per-model justifications are in Supplementary~\ref{supp:results_caveats}; all four are excluded from the per-architecture directional narrative below.

Table~\ref{tab:md17_fcos} shows a clean separation by maximum angular momentum. EquiformerV2 at $L_{\max}\!\geq\!2$ attains aggregate force cosine similarity near $0.96$, CliffordSTF attains $0.551$, and every other direct-force model sits below $0.25$. Within the $L\!\leq\!1$ block the picture splits further: the vector-equivariant baselines (PaiNN, TorchMD-NET, ViSNet, NequIP) cluster near zero, while the body-ordered $L{=}1$ baselines (MACE, ICTP) reach at most $0.23$, still far below the CliffordSTF result.

The decisive within-paper comparison is Clifford versus CliffordSTF, which share message-passing infrastructure, training protocol, and parameter budget and differ only by the addition of STF tracks with cross-track coupling. \textbf{Force cosine similarity climbs from $0.055$ to $0.551$, an order-of-magnitude directional improvement.} The same comparison improves force MAE from $20.9$ to $17.6$\,meV/\AA{} ($15.8\%$ lower) and energy MAE from $34.1$ to $30.4$\,meV ($10.9\%$ lower): the STF extension improves directional fidelity and magnitude accuracy together, not at one another's expense. Per-molecule force and energy MAE appear in Supplementary~\ref{supp:md17_permol}.

\begin{table}[t]
\centering
\caption{\textbf{rMD17 force cosine similarity} per molecule and aggregate (mean $\pm$ std across 3 seeds; higher is better, $[-1,1]$ range). Top block: $L\!\geq\!2$ direct-force models. Middle block: $L\!\leq\!1$ direct-force models. Bottom block (gray): strictly invariant models, forces from energy gradients. $\dagger$ absolute-energy prediction (not energy-referenced---magnitude reflects binding energy, not error). GotenNet, FAENet, and MACE/ICTP force MAEs in the $10^3$\,meV/\AA{} range reflect non-convergence at the $10^6$-parameter budget. Per-model reasoning in Supplementary~\ref{supp:results_caveats}; per-molecule force/energy MAE in Supplementary~\ref{supp:md17_permol}.}
\label{tab:md17_fcos}
\setlength{\tabcolsep}{3pt}
\resizebox{\textwidth}{!}{%
\begin{tabular}{lccccccccccc}
\toprule
Model & Asp & Azo & Bnz & Eth & Mal & Nap & Par & Sal & Tol & Ura & Agg \\
\midrule
EquiformerV2 $L{=}4$$^{\dagger}$ & 0.920 & 0.958 & 0.986 & 0.987 & 0.976 & 0.980 & 0.922 & 0.949 & 0.976 & 0.956 & \textbf{0.961} $\pm$ 0.002 \\
EquiformerV2 $L{=}3$$^{\dagger}$ & 0.916 & 0.962 & 0.984 & 0.981 & 0.959 & 0.982 & 0.934 & 0.950 & 0.975 & 0.960 & 0.960 $\pm$ 0.002 \\
EquiformerV2 $L{=}2$$^{\dagger}$ & 0.911 & 0.956 & 0.986 & 0.961 & 0.953 & 0.983 & 0.933 & 0.946 & 0.971 & 0.958 & 0.956 $\pm$ 0.002 \\
CliffordSTF (ours)   & 0.179 & 0.721 & 0.862 & 0.586 & 0.594 & 0.794 & 0.319 & 0.406 & 0.685 & 0.362 & \textbf{0.551} $\pm$ 0.064 \\
MACE $L{=}3$         & 0.161 & 0.303 & 0.210 & 0.036 & 0.202 & 0.186 & 0.156 & 0.181 & 0.082 & 0.261 & 0.178 $\pm$ 0.017 \\
MACE $L{=}2$         & 0.083 & 0.239 & 0.252 & -0.018 & 0.128 & 0.263 & 0.171 & 0.135 & 0.091 & 0.193 & 0.154 $\pm$ 0.017 \\
ICTP $L{=}3$         & 0.175 & 0.344 & 0.198 & 0.021 & 0.199 & 0.249 & 0.180 & 0.096 & 0.064 & 0.218 & 0.174 $\pm$ 0.030 \\
ICTP $L{=}2$         & 0.176 & 0.232 & 0.290 & -0.101 & 0.107 & 0.266 & 0.196 & 0.202 & 0.149 & 0.134 & 0.165 $\pm$ 0.051 \\
GotenNet    & 0.088 & -0.021 & -0.022 & 0.089 & 0.062 & 0.003 & 0.026 & 0.072 & -0.090 & 0.093 & 0.030 $\pm$ 0.012 \\
\midrule
Clifford (ours) & 0.012 & 0.030 & 0.075 & -0.004 & 0.012 & 0.120 & 0.010 & 0.131 & 0.047 & 0.117 & 0.055 $\pm$ 0.004 \\
MACE $L{=}1$         & 0.181 & 0.370 & 0.312 & 0.101 & 0.247 & 0.284 & 0.184 & 0.262 & 0.138 & 0.196 & 0.228 $\pm$ 0.022 \\
ICTP $L{=}1$         & 0.180 & 0.242 & 0.283 & 0.026 & 0.210 & 0.177 & 0.210 & 0.211 & 0.073 & 0.238 & 0.185 $\pm$ 0.012 \\
PaiNN                & -0.006 & 0.088 & 0.003 & 0.085 & -0.022 & 0.032 & -0.002 & 0.032 & -0.052 & -0.022 & 0.014 $\pm$ 0.032 \\
TorchMD-NET          & 0.011 & 0.029 & 0.048 & 0.204 & -0.069 & -0.052 & 0.001 & 0.064 & 0.034 & 0.029 & 0.030 $\pm$ 0.006 \\
ViSNet               & -0.023 & 0.045 & -0.126 & 0.049 & 0.052 & 0.008 & 0.018 & -0.006 & 0.079 & -0.044 & 0.005 $\pm$ 0.025 \\
NequIP    & 0.064 & 0.155 & 0.070 & -0.019 & 0.063 & 0.065 & 0.019 & -0.140 & -0.064 & -0.107 & 0.011 $\pm$ 0.001 \\
FAENet      & 0.024 & 0.039 & 0.518 & 0.088 & 0.154 & 0.132 & 0.096 & 0.041 & 0.120 & 0.211 & 0.142 $\pm$ 0.058 \\
\midrule
\color{gray}{DimeNet++}   & \color{gray}{0.029} & \color{gray}{0.053} & \color{gray}{0.050} & \color{gray}{0.032} & \color{gray}{0.020} & \color{gray}{0.030} & \color{gray}{0.054} & \color{gray}{0.055} & \color{gray}{0.047} & \color{gray}{0.003} & \color{gray}{0.037 $\pm$ 0.013} \\
\color{gray}{SchNet}      & \color{gray}{-0.029} & \color{gray}{-0.098} & \color{gray}{-0.013} & \color{gray}{0.159} & \color{gray}{0.068} & \color{gray}{-0.112} & \color{gray}{-0.026} & \color{gray}{-0.065} & \color{gray}{-0.086} & \color{gray}{-0.038} & \color{gray}{-0.024 $\pm$ 0.007} \\
\bottomrule
\end{tabular}%
}
\end{table}

\subsection{Catalysis: OC22 and OC20}\label{sec:results_catalysis}

\textbf{OC22.} CliffordSTF is competitive across the OC22 benchmark. On IS2RE, it attains the best in-distribution MAE among $L\!\geq\!2$ architectures ($4.09$\,eV; next-best $L\!\geq\!2$ GotenNet at $4.32$\,eV) and matches EquiformerV2 on the OOD split within seed variance, behind GotenNet ($5.77$\,eV, the strongest IS2RE-OOD model in the study); CliffordSTF additionally attains the best $L\!\geq\!2$ EwT out-of-distribution. On S2EF, CliffordSTF attains the best out-of-distribution energy MAE in the study and is within seed variance of the best in-distribution energy MAE (held by base Clifford at $2.844$ vs $2.866$\,eV); EquiformerV2 leads on force cosine similarity, with force MAEs clustering tightly across the top three models. Full numbers in Table~\ref{tab:oc22}.

\begin{table}[t]
\centering
\caption{\textbf{OC22 main results} (mean $\pm$ std across 3 seeds). Best value per column in \textbf{bold}. EquiformerV2 is our controlled-protocol result at $L_{\max}{=}4$.}
\label{tab:oc22}
\setlength{\tabcolsep}{3pt}
\resizebox{\textwidth}{!}{%
\begin{tabular}{lcccccccc}
\toprule
& \multicolumn{2}{c}{IS2RE MAE (eV)} & \multicolumn{3}{c}{S2EF $\mathrm{val}_{\text{id}}$} & \multicolumn{3}{c}{S2EF $\mathrm{val}_{\text{ood}}$} \\
\cmidrule(lr){2-3}\cmidrule(lr){4-6}\cmidrule(lr){7-9}
Model & $\mathrm{val}_{\text{id}}$ & $\mathrm{val}_{\text{ood}}$ & $e_\mathrm{MAE}$ (eV) & $f_\mathrm{MAE}$ (eV/\AA) & $f_\mathrm{cos}\uparrow$ & $e_\mathrm{MAE}$ (eV) & $f_\mathrm{MAE}$ (eV/\AA) & $f_\mathrm{cos}\uparrow$ \\
\midrule
CliffordSTF (ours) & 4.089 $\pm$ 0.261 & 6.463 $\pm$ 0.310 & 2.866 $\pm$ 0.093 & 0.073 $\pm$ 0.000 & 0.050 $\pm$ 0.004 & \textbf{5.083} $\pm$ 0.166 & \textbf{0.070} $\pm$ 0.000 & 0.050 $\pm$ 0.003 \\
Clifford (ours)    & 4.422 $\pm$ 0.862 & 6.726 $\pm$ 0.656 & \textbf{2.844} $\pm$ 0.100 & \textbf{0.072} $\pm$ 0.000 & 0.059 $\pm$ 0.002 & 5.375 $\pm$ 0.157 & 0.071 $\pm$ 0.001 & 0.058 $\pm$ 0.002 \\
EquiformerV2       & 4.905 $\pm$ 0.566 & 6.619 $\pm$ 0.445 & 3.369 $\pm$ 0.066 & \textbf{0.072} $\pm$ 0.000 & \textbf{0.067} $\pm$ 0.001 & 5.262 $\pm$ 0.076 & \textbf{0.070} $\pm$ 0.000 & \textbf{0.062} $\pm$ 0.002 \\
PaiNN              & \textbf{3.516} $\pm$ 0.292 & 6.712 $\pm$ 0.140 & 5.306 $\pm$ 0.030 & 0.076 $\pm$ 0.000 & 0.048 $\pm$ 0.001 & 7.452 $\pm$ 0.254 & 0.078 $\pm$ 0.000 & 0.036 $\pm$ 0.001 \\
GotenNet           & 4.318 $\pm$ 0.300 & \textbf{5.772} $\pm$ 0.532 & 4.666 $\pm$ 0.310 & 0.073 $\pm$ 0.000 & 0.052 $\pm$ 0.001 & 7.200 $\pm$ 0.121 & 0.072 $\pm$ 0.001 & 0.044 $\pm$ 0.001 \\
TorchMD-NET        & 5.397 $\pm$ 0.000 & 8.312 $\pm$ 0.000 & 4.278 $\pm$ 0.000 & 0.075 $\pm$ 0.000 & 0.051 $\pm$ 0.000 & 6.232 $\pm$ 0.000 & 0.074 $\pm$ 0.000 & 0.043 $\pm$ 0.000 \\
DimeNet++          & 6.320 $\pm$ 4.141 & 9.897 $\pm$ 3.002 & 5.288 $\pm$ 0.030 & 0.076 $\pm$ 0.000 & 0.054 $\pm$ 0.002 & 8.717 $\pm$ 0.075 & 0.077 $\pm$ 0.001 & 0.038 $\pm$ 0.002 \\
SchNet             & 6.402 $\pm$ 1.688 & 9.424 $\pm$ 0.938 & 7.690 $\pm$ 1.508 & 0.077 $\pm$ 0.002 & 0.032 $\pm$ 0.007 & 12.016 $\pm$ 1.938 & 0.078 $\pm$ 0.002 & 0.024 $\pm$ 0.003 \\
MACE $L{=}2$       & 8.416 $\pm$ 0.702 & 22.569 $\pm$ 7.674 & 10.017 $\pm$ 0.395 & 0.079 $\pm$ 0.000 & 0.022 $\pm$ 0.004 & 15.415 $\pm$ 0.808 & 0.076 $\pm$ 0.000 & 0.022 $\pm$ 0.005 \\
ICTP $L{=}2$       & 8.641 $\pm$ 1.744 & 17.756 $\pm$ 4.145 & 12.447 $\pm$ 0.354 & 0.081 $\pm$ 0.001 & 0.015 $\pm$ 0.002 & 19.218 $\pm$ 0.232 & 0.079 $\pm$ 0.001 & 0.017 $\pm$ 0.002 \\
ICTP $L{=}1$       & 10.963 $\pm$ 1.612 & 20.810 $\pm$ 0.868 & 11.483 $\pm$ 0.881 & 0.080 $\pm$ 0.001 & 0.019 $\pm$ 0.001 & 17.263 $\pm$ 0.655 & 0.186 $\pm$ 0.154 & 0.019 $\pm$ 0.000 \\
MACE $L{=}1$       & 12.339 $\pm$ 4.540 & 15.800 $\pm$ 2.430 & 10.011 $\pm$ 0.018 & 0.078 $\pm$ 0.000 & 0.024 $\pm$ 0.000 & 15.737 $\pm$ 0.438 & 0.076 $\pm$ 0.000 & 0.023 $\pm$ 0.001 \\
NequIP             & 12.575 $\pm$ 4.889 & 14.865 $\pm$ 3.676 & 9.868 $\pm$ 0.148 & 0.079 $\pm$ 0.000 & 0.028 $\pm$ 0.000 & 15.809 $\pm$ 0.071 & 0.076 $\pm$ 0.000 & 0.027 $\pm$ 0.001 \\
\bottomrule
\end{tabular}%
}
\end{table}

\textbf{OC20.} On OC20, which uses a different distribution of catalysis systems than OC22, CliffordSTF maintains competitive results: in-distribution IS2RE MAE $0.703$\,eV (between DimeNet++ at $0.699$ and PaiNN at $0.720$), with the model in the top half of the field on every split. On S2EF, EquiformerV2 leads on force cosine similarity ($0.170$ vs $0.138$ for CliffordSTF), while the base Clifford model attains the best energy MAE ($1.89$ vs $2.11$ for CliffordSTF). Full per-split tables in Supplementary~\ref{supp:oc_full}.

\textbf{Scaling check.} Scaling CliffordSTF to $10^7$ parameters preserves stable training and yields the expected improvement on the catalysis task. On OC22 IS2RE, the best CliffordSTF seed reaches in-distribution MAE of $0.0162$\,eV/atom at $44$ epochs, a roughly threefold gain over its $10^6$-parameter result. EquiformerV2-L2 scales more steeply on the same benchmark ($0.0094$\,eV/atom, about sixfold), overtaking CliffordSTF at this larger capacity, while MACE-L2 stays essentially flat at $0.0779$\,eV/atom, consistent with its under-convergence at the smaller budget. On OC22 S2EF force MAE, CliffordSTF reaches $0.0769$\,eV/\AA{}, within $1.4\%$ of EquiformerV2-L2 at $0.0758$\,eV/\AA{} and ahead of MACE-L2 at $0.0831$\,eV/\AA{}. Base Clifford also trains stably to $10^7$ parameters and reaches $0.0773$\,eV/\AA{} force MAE on OC22 S2EF, essentially indistinguishable from CliffordSTF. The check confirms that the family extends to higher capacity without breaking, with no degradation observed on either the Clifford or CliffordSTF model. Details in Supplementary~\ref{supp:scaling_check}.

\subsection{Scalar properties, ablation, and cost}\label{sec:results_scalar}

On QM9 six-target aggregate MAE (mixed units across targets; the aggregate averages normalised targets), CliffordSTF ranks fifth of ten at $2.15$\,units, confirming that the STF extension does not degrade scalar prediction below vector-equivariant baselines. On Molecule3D HOMO--LUMO gap, CliffordSTF attains MAE $1.23$\,eV, behind FAENet and the strong mid-pack and ahead of DimeNet++, SchNet, and NequIP. FAENet's large advantage on this benchmark reflects its scalar-specialised design; we treat Molecule3D as out-of-scope for the architectural thesis and discuss in \S\ref{sec:limitations}. Full per-target tables are in Supplementary~\ref{supp:qm9_full}.

\textbf{Ablation.} An eleven-variant ablation on rMD17 (Table~\ref{tab:ablation}) isolates each track's contribution and shows the two lineages are empirically complementary. A Clifford-only configuration (vanilla $L{=}1$ Clifford, no STF anywhere) reaches $f_{cos}{=}0.030$; an STF-only configuration (STF tracks present at the output only, no STF inside the message-passing scaffold) reaches $f_{cos}{=}0.097$; only the hybrid CliffordSTF scaffold attains $f_{cos}{\in}[0.421, 0.506]$, an order of magnitude above Clifford-only and roughly fivefold above the readout-only tier. The fivefold separation between the readout-only and scaffold tiers isolates the dominant effect to $L\!\geq\!2$ representations participating throughout message passing, ruling out an explanation in which an $L\!\geq\!2$ readout alone is sufficient. Within the scaffold, individual flag toggles produce smaller effects that are often within seed-to-seed variance on headline $f_{cos}$, but they matter noticeably for energy MAE stability (e.g., Full no cross reaches $e_{\text{MAE}}{=}70.4$\,meV versus $46.0$ for Full with cross) and for the per-seed standard deviation. The cross-track coupling contribution is therefore better read as a stability lever than a raw-accuracy driver: the architectural claim is that the $L\!\geq\!2$ STF scaffold unlocks directional capacity, and cross-track coupling stabilises it for reliable end-to-end training.

\begin{table}
\centering
\caption{\textbf{CliffordSTF ablation on rMD17} (mean $\pm$ std across 6 seeds per (variant, molecule); molecules: aspirin, benzene, ethanol, salicylic acid). Twelve variants in three tiers: vanilla $L{=}1$ Clifford (bottom), readout-only STF (middle), and CliffordSTF-scaffold variants (top). Channel counts tuned per row to hold parameter counts at approximately $10^6$. Interpretation in \S\ref{sec:results_scalar}; flag definitions in Supplementary~\ref{supp:variants}.}
\label{tab:ablation}
\setlength{\tabcolsep}{3pt}
\resizebox{\textwidth}{!}{%
\begin{tabular}{l ccccc ccc}
\toprule
& \multicolumn{5}{c}{\textbf{Configuration}} & \multicolumn{3}{c}{\textbf{Metrics}} \\
\cmidrule(lr){2-6}\cmidrule(lr){7-9}
Variant & STF mode & Hodge & Cross & Routing & Mode & 
$f_\mathrm{MAE}$ (meV/\AA) & $e_\mathrm{MAE}$ (meV) & $f_\mathrm{cos}\uparrow$ \\
\midrule
Full                           & stf2+stf3 & on  & on  & on  & learned & 16.802 $\pm$ 0.556 & 45.986 $\pm$ 48.658 & \textbf{0.506} $\pm$ 0.045 \\
STF2 static routing            & stf2      & on  & on  & on  & static  & 16.836 $\pm$ 0.640 & 26.482 $\pm$ 6.624  & 0.504 $\pm$ 0.041 \\
CliffordSTF scaffold ($L{=}2$) & none      & off & off & off & ---     & 17.221 $\pm$ 0.537 & 26.294 $\pm$ 8.464  & 0.498 $\pm$ 0.084 \\
STF2 learned routing           & stf2      & on  & on  & on  & learned & 16.768 $\pm$ 1.202 & 37.778 $\pm$ 13.440 & 0.492 $\pm$ 0.100 \\
STF2+STF3                      & stf2+stf3 & on  & on  & off & ---     & 17.029 $\pm$ 1.016 & 36.792 $\pm$ 9.466  & 0.489 $\pm$ 0.119 \\
Full no cross                  & stf2+stf3 & on  & off & on  & learned & 17.170 $\pm$ 0.674 & 70.409 $\pm$ 92.001 & 0.474 $\pm$ 0.072 \\
Hodge only                     & none      & on  & off & off & ---     & \textbf{16.584} $\pm$ 1.166 & 30.975 $\pm$ 13.434 & 0.472 $\pm$ 0.103 \\
STF2 no hodge                  & stf2      & off & on  & off & ---     & 17.688 $\pm$ 0.251 & 36.261 $\pm$ 25.029 & 0.460 $\pm$ 0.067 \\
STF2 no cross                  & stf2      & on  & off & off & ---     & 17.317 $\pm$ 0.863 & \textbf{25.790} $\pm$ 13.248 & 0.456 $\pm$ 0.095 \\
STF2                           & stf2      & on  & on  & off & ---     & 17.511 $\pm$ 0.947 & 34.909 $\pm$ 16.325 & 0.421 $\pm$ 0.107 \\
\midrule
STF only output (no scaffold)  & ---       & --- & --- & --- & ---     & 19.545 $\pm$ 0.098 & 36.640 $\pm$ 39.015 & 0.097 $\pm$ 0.028 \\
Vanilla Clifford ($L{=}1$)     & ---       & --- & --- & --- & ---     & 19.704 $\pm$ 0.043 & 125.848 $\pm$ 77.191 & 0.030 $\pm$ 0.003 \\
\bottomrule
\end{tabular}}
\end{table}

\textbf{Inference cost.} On wall-clock at matched parameter budget (rMD17 aspirin, batch 32, single GPU), the base Clifford model runs inference at $0.128$\,s/step, faster than EquiformerV2 $L{=}4$ ($0.142$\,s) and every other $L\!\geq\!2$ or Cartesian-tensor baseline in the study. CliffordSTF inference at $0.352$\,s/step is competitive with the slower spherical-harmonic baselines, faster than ICTP $L{=}2/L{=}3$ and within $8\%$ of DimeNet++. Since deployed interatomic potentials are evaluated far more often than they are trained, the inference picture is the operationally relevant one for downstream use.

\textbf{Training cost.} Base Clifford trains at $3.11$\,s/step (within $25\%$ of EquiformerV2 $L{=}4$, faster than every ICTP variant and DimeNet++), and CliffordSTF at $7.77$\,s/step is within $25$--$73\%$ of capacity-scaled $L\!\geq\!2$ baselines like ICTP and DimeNet++; the residual two-to-threefold overhead is concentrated against the fastest spherical-harmonic models (MACE, EquiformerV2). The forward/backward decomposition shows that the autograd graph through the dual-track scaffold dominates this residual, not the geometric-product dispatch itself. Full measurements and the underlying PyTorch-level \citep{paszke2019pytorch} optimizations are in Supplementary~\ref{supp:wallclock} and~\ref{supp:optim}; further discussion in \S\ref{sec:limitations}.
\section{Limitations}\label{sec:limitations}

\textbf{Training wall-clock.} CliffordSTF training lags the fastest spherical-harmonic baselines (MACE, EquiformerV2) by roughly two-to-threefold per step at matched parameter budget, with the autograd graph through the dual-track scaffold dominating the overhead (\S\ref{sec:results_scalar}, Supplementary~\ref{supp:wallclock}). Further headroom of an estimated $40$--$60\%$ is plausible through full \texttt{torch.compile} engagement (currently blocked upstream by \texttt{torch\_scatter} custom-op tracing under dynamic shapes) and fused Triton kernels for the STF$_2$/STF$_3$ products. The STF$_2$-only variant (Table~\ref{tab:ablation}) may suffice for long MD trajectories at lower per-step cost. At $10^7$ parameters, per-step cost is roughly threefold the $10^6$-parameter regime, in line with the additional capacity.

\textbf{Training stability on scalar benchmarks.} QM9 and Molecule3D required gradient clipping to prevent early-training divergence from the bilinear cross-track coupling. A more principled cross-track initialisation or dedicated STF normalisation would likely address this without the clipping.

\textbf{Fixed-budget protocol.} Targeting cross-architecture comparability rather than per-architecture tuning means several baselines (notably MACE, ICTP, and GotenNet) underconverge relative to their published recipes (per-model reasoning in Supplementary~\ref{supp:results_caveats}). Our catalysis results characterise the sub-10M-parameter regime rather than the leaderboard frontier, where published OC20/OC22 numbers use $10$--$100\times$ more parameters.

\textbf{Absolute directional fidelity gap.} While CliffordSTF improves rMD17 directional fidelity by an order of magnitude over base Clifford, absolute fidelity remains below fully spherical-harmonic architectures: EquiformerV2 at $L\!\geq\!2$ attains $\sim\!0.96$ aggregate force-cosine similarity, versus $0.551$ for CliffordSTF. Closing this remaining gap without spherical harmonics, Wigner-$D$ matrices, or CG tables is a natural extension.

\textbf{Scope of the directional analysis.} Our diagnostic is restricted to rMD17 small organics in vacuum. Whether the $L\!\leq\!1$ clustering at poor cosine alignment persists on periodic systems, larger molecules, or solvated condensed phases remains open, though CliffordSTF's catalysis results (\S\ref{sec:results_catalysis}) suggest the phenomenon generalises beyond the rMD17 setting.

\section{Discussion}\label{sec:discussion}

This paper joins two equivariant lineages that have developed largely in parallel. Geometric algebra treats the multivector as the primary substrate and derives rotational covariance from the geometric product, while irreducible Cartesian tensors treat symmetric traceless tensors as the primary substrate and derive covariance from algebraic symmetry. Our results suggest neither substrate alone is sufficient for atomistic force prediction at the per-edge bilinear level: the Clifford geometric product reaches $L \leq 1$ exactly, Cartesian symmetric traceless tensors cover $L \geq 2$ but forfeit the natural scalar and pseudoscalar structure of the Clifford algebra, and only the hybrid scaffold (\S\ref{sec:results_scalar}, Table~\ref{tab:ablation}) attains the order-of-magnitude directional gain that motivates the architecture. The broader implication is that representational choice in equivariant architectures should be governed by the angular content one actually wants to represent, not by loyalty to a particular algebraic formalism; Theorem~\ref{thm:cg_completeness} makes this concrete by showing that the geometric product and closed-form STF operations together span the same $\mathrm{SO}(3)$-irrep content as spherical harmonics through $L{=}3$. Within the hybrid, the ablation further isolates the $L \geq 2$ scaffold, rather than any single coupling flag, as the dominant accuracy driver, and identifies cross-track coupling as a training-stability lever; this is consistent with the theoretical framing, since the scaffold is what unlocks representational capacity while the coupling governs whether the gradients flowing through it behave predictably. We see promising extensions along three axes: higher $L_{\max}$ through iterated STF constructions, fused hardware kernels to close the wall-clock gap against spherical-harmonic baselines, and full $\mathrm{O}(3)$ equivariance through explicit parity tracking of polar and axial vector channels (Remark~\ref{rem:so3_vs_o3}). We release the full PyTorch implementation of CliffordSTF along with all baseline methods and benchmark hyperparameters at \texttt{https://github.com/KurbanIntelligenceLab/CliffordSTF}

\bibliographystyle{unsrtnat}
\bibliography{main}

\newpage
\appendix
\section{Supplementary Information}\label{sec:supp}




\subsection{Background on \texorpdfstring{$\mathrm{Cl}(3,0)$}{Cl(3,0)} and the geometric product}\label{supp:prelim}

The real Clifford algebra $\mathrm{Cl}(3,0)$ is the eight-dimensional associative unital algebra generated by an orthonormal basis $\{e_1, e_2, e_3\}$ of $\mathbb{R}^3$ subject to $e_ie_j + e_je_i = 2\delta_{ij}$ \citep{lounesto2001clifford,doran2003geometric}. A basis of $\mathrm{Cl}(3,0)$ ordered by grade is $\{1,\; e_1, e_2, e_3,\; e_{12}, e_{13}, e_{23},\; e_{123}\}$, corresponding respectively to a scalar (grade 0), three vectors (grade 1), three bivectors (grade 2), and a pseudoscalar (grade 3). Any multivector $X \in \mathrm{Cl}(3,0)$ decomposes uniquely as $X = X^{(0)} + X^{(1)} + X^{(2)} + X^{(3)}$ with $X^{(k)}$ the grade-$k$ component.

The geometric product is the unique bilinear, associative, unital product satisfying the generator relations above. For two vectors $u, v \in \mathbb{R}^3$ embedded in $\mathrm{Cl}(3,0)$ it decomposes as $uv = u \cdot v + u \wedge v$, where $u \cdot v = \tfrac{1}{2}(uv + vu) = \sum_i u_iv_i$ is the symmetric scalar part and $u \wedge v = \tfrac{1}{2}(uv - vu) = \sum_{i<j}(u_iv_j - u_jv_i)e_{ij}$ is the antisymmetric bivector part.

The pseudoscalar $I = e_1e_2e_3$ satisfies $I^2 = -1$, so $I^{-1} = -I$. The Hodge dual $\star: \Lambda^k\mathbb{R}^3 \to \Lambda^{3-k}\mathbb{R}^3$ can be defined as $\star\alpha = \alpha\, I^{-1}$. Direct computation yields the sign convention
\[
\star e_{12} = +e_3,\qquad \star e_{13} = -e_2,\qquad \star e_{23} = +e_1,
\]
which is the standard right-handed orientation of \citet{doran2003geometric,lounesto2001clifford} under which $\star(u\wedge v)$ equals the ordinary cross product $u\times v$. Under $\mathrm{SO}(3)$, $\star$ is equivariant. Under $\mathrm{O}(3)$, the Hodge dual of a bivector is an axial (pseudo)vector: bivectors are invariant under spatial inversion ($e_ie_j \to (-e_i)(-e_j) = e_ie_j$), so the Hodge vector inherits parity $+1$ while a polar vector carries parity $-1$; see Supplementary~\ref{supp:so3_o3}.

The geometric product is $\mathrm{O}(3)$-equivariant under the rotor action: for any rotor $R = \exp(-\theta B/2)$ built from a unit bivector $B$ and any multivectors $a, b$, conjugation $a \mapsto RaR^{-1}$ preserves the product, $R(ab)R^{-1} = (RaR^{-1})(RbR^{-1})$ \citep{doran2003geometric}. This is the single inductive bias the Clifford track exploits.


\subsection{Architectural comparison with ICTP}\label{supp:ictp}

CliffordSTF and ICTP~\citep{zaverkin2024ictp} both use symmetric traceless Cartesian tensors as the primary carrier of $L \geq 2$ information, and both prove equivariance and tracelessness of the resulting layers. The two architectures nevertheless differ in four substantive ways that correspond to distinct design philosophies.

\textbf{Algebraic substrate.} ICTP is built entirely from irreducible Cartesian tensors and their products; spherical tensors do not appear. CliffordSTF runs a Clifford $\mathrm{Cl}(3,0)$ multivector track in parallel with the STF tracks, and the Clifford track carries both scalars and pseudoscalars that have a natural parity structure through the grade involution, as well as bivectors that expose an extra $L{=}1$ axial channel via the Hodge star. The STF tracks in CliffordSTF supply the $L{=}2$ and $L{=}3$ components that the Clifford geometric product cannot reach, not the entire equivariant substrate.

\textbf{Bilinear operation.} ICTP uses $\nu$-fold tensor products of irreducible Cartesian tensors as its core bilinear, following MACE-style many-body expansion. CliffordSTF uses two different bilinears in tandem: the Clifford geometric product acting on the 8-dimensional multivector via an $8 \times 8$ Cayley table, and a small set of closed-form symmetric-traceless contractions acting on the STF tracks. The Clifford product handles scalar, vector, bivector, and pseudoscalar channels with a single contraction; the STF operations handle rank-2 and rank-3 tensors through explicit symmetrisation and trace removal. The two are coupled bijectively to Clebsch--Gordan channels as specified in Theorem~\ref{thm:cg_completeness} and its higher-order extensions.

\textbf{Cross-track coupling.} In ICTP, interactions between rank-$L_1$ and rank-$L_2$ tensors happen through a single irreducible Cartesian tensor product operation. In CliffordSTF the Clifford and STF tracks are coupled bidirectionally: the STF$_2$ track is generated from grade-1 vector features via $\mathrm{STF}_2(\hat{r}_{ij}, h_j^{\mathrm{vec}})$, and is fed back into the grade-1 vector component of the Clifford message through the $S \cdot v$ contraction. Ablating this coupling (the \texttt{stf2\_no\_cross} variant) causes a substantial performance regression compared to the base Clifford model, demonstrating that the STF track by itself does not carry a usable training signal without coupling back to the Clifford substrate.

\textbf{Equivariance scope.} ICTP proves $\mathrm{O}(3)$ equivariance (rotations and reflections), with parity tracked explicitly through the product structure. CliffordSTF proves $\mathrm{SO}(3)$ equivariance; under $\mathrm{O}(3)$, the polar-vector grade-1 component and the axial-vector Hodge-of-bivector component would need distinct parity labels $1^-$ and $1^+$ respectively (Remark~\ref{rem:so3_vs_o3}, Supplementary~\ref{supp:so3_o3}). The current implementation concatenates the two in the force readout under the $\mathrm{SO}(3)$ framework, which is consistent under rotations but would need refactoring for full $\mathrm{O}(3)$.

These differences explain why the two architectures occupy different operating points. ICTP targets state-of-the-art accuracy on small-molecule benchmarks at the cost of a denser bilinear. CliffordSTF targets a minimal algebraic substrate with closed-form coupling, trading raw small-molecule accuracy for parameter efficiency on catalysis (OC20, OC22) and for a direct mechanistic handle on which design choices drive which gains, as demonstrated by our eleven-variant ablation (\S\ref{sec:results_scalar}).

\subsection{Reconciliation with prior Clifford-network grade structure}\label{supp:gatr}

\citet{brehmer2023geometric} builds a geometric algebra transformer in the four-dimensional projective algebra $\mathrm{Cl}(3,0,1)$. Under the $\mathrm{SO}(3)$ subgroup of its symmetry group, GATr's hidden multivectors decompose as two copies of $L{=}0$ (a scalar and a trivector) and two copies of $L{=}1$ (a vector and a bivector, axial under Hodge duality); no $L \geq 2$ irrep is represented inside the multivector. The same decomposition holds for the non-projective $\mathrm{Cl}(3,0)$ used here, because the extra null basis element of the projective algebra carries only translational information and does not contribute $L \geq 2$ content. Our diagnostic is therefore about the $\mathrm{SO}(3)$-irrep content of the multivector substrate \emph{at the linear bilinear level}, not about the choice of projective versus non-projective algebra: both carry the same $L{=}2$ deficit. \citet{ruhe2023cgenn} shows that $L \geq 2$ content can be reconstructed \emph{nonlinearly} through repeated geometric products, but this indirect access incurs the depth cost and gradient pathologies that our direct STF tracks avoid.


\subsection{\texorpdfstring{$\mathrm{SO}(3)$}{SO(3)} versus \texorpdfstring{$\mathrm{O}(3)$}{O(3)} equivariance}\label{supp:so3_o3}

All guarantees stated in \S\ref{sec:method} are $\mathrm{SO}(3)$-equivariance guarantees. Under $\mathrm{SO}(3)$, the $L{=}1$ representation is unique up to isomorphism, so polar vectors (grade 1 of $\mathrm{Cl}(3,0)$) and axial vectors (Hodge duals of grade-2 bivectors) transform identically and may be freely concatenated in the force readout. Under $\mathrm{O}(3)$ they differ by a factor of $\det(R)$ under the action of $R \in \mathrm{O}(3)\setminus\mathrm{SO}(3)$, corresponding to the parity labelling summarised in Table~\ref{tab:parity}; the convention follows Jackson's \emph{Classical Electrodynamics} \S6.10 and the \texttt{e3nn} library convention \citep{geiger2022e3nn}.

\begin{table}[h]
\centering
\caption{$\mathrm{O}(3)$ parity labels for the tracks used in CliffordSTF. The general rule is $\mathrm{STF}_\ell$ built from polar vectors has parity $(-1)^\ell$, matching the parity of the spherical harmonic $Y_{\ell m}(-\hat{n}) = (-1)^\ell Y_{\ell m}(\hat{n})$.}\label{tab:parity}
\small
\begin{tabular}{lccc}
\toprule
Object & Physics analog & Parity & $\mathrm{O}(3)$ label \\
\midrule
Grade-0 scalar & charge $q$ & $+$ & $0^+$ \\
Grade-1 vector $e_i$ & position $\mathbf{r}$ & $-$ & $1^-$ (polar) \\
Grade-2 bivector $e_{ij}$ / Hodge dual & angular momentum $\mathbf{L}{=}\mathbf{r}\times\mathbf{p}$ & $+$ & $1^+$ (axial) \\
Grade-3 pseudoscalar $I$ & triple product $\mathbf{a}\cdot(\mathbf{b}\times\mathbf{c})$ & $-$ & $0^-$ \\
$\mathrm{STF}_2$ from polar vectors & quadrupole $Q_{ij}$ & $+$ & $2^+$ \\
$\mathrm{STF}_3$ from polar vectors & octupole $O_{ijk}$ & $-$ & $3^-$ \\
\bottomrule
\end{tabular}
\end{table}

Extending to full $\mathrm{O}(3)$-equivariance would require treating the grade-1 polar-vector feature and the Hodge-of-bivector axial-vector feature as distinct $1^-$ and $1^+$ tracks rather than concatenating them, and likewise distinguishing $2^+$ and $3^-$ tracks. All bilinear interactions in the architecture respect this parity grading when made explicit; the current implementation simply chooses not to distinguish $1^+$ from $1^-$, which is consistent under $\mathrm{SO}(3)$ but would conflate them under $\mathrm{O}(3)$.

\subsection{Derivation of the STF coefficients}\label{supp:stf_coefficients}

We derive the three rational coefficients appearing in the STF$_2$ and STF$_3$ formulas of \S\ref{sec:stf_channels}.

\textbf{STF$_2$ detracing coefficient $\tfrac{1}{3}$.} In $d$ dimensions, the symmetric traceless part of a rank-2 tensor requires subtracting $\tfrac{1}{d} \delta_{ij} T_{kk}$ to enforce $\sum_i T_{ii}^{\text{sym-traceless}} = 0$. Direct check in $d=3$: $\delta_{ii} \mathrm{STF}_2(u,v)_{ii} = (u\cdot v) - \tfrac{1}{3}(3)(u\cdot v) = 0$.

\textbf{STF$_3$ trace vector coefficient $\tfrac{2}{3}$.} Since $S_{ij}$ is symmetric traceless ($S_{ii}=0$), the trace of $T^{\text{sym}}_{ijk}$ over any pair of indices reduces to two of the three summands:
\[
T^{\text{sym}}_{iik} = \tfrac{1}{3}(S_{ii}v_k + S_{ik}v_i + S_{ik}v_i) = \tfrac{1}{3}(0 + 2\,S_{kl}v_l) = \tfrac{2}{3}\,S_{kl}v_l.
\]
By full symmetry, $T^{\text{sym}}_{iki} = T^{\text{sym}}_{kii} = A_k$ as well.

\textbf{STF$_3$ detracing coefficient $\tfrac{1}{5}$.} Set $U_{ijk} = T^{\text{sym}}_{ijk} - \alpha(\delta_{ij}A_k + \delta_{ik}A_j + \delta_{jk}A_i)$ and require $U_{iik} = 0$. Computing the trace:
\[
U_{iik} = A_k - \alpha(3A_k + A_k + A_k) = A_k - 5\alpha A_k \implies \alpha = \tfrac{1}{5}.
\]
More generally, in $d$ dimensions the per-delta first-trace coefficient at rank $\ell$ is $1/(2\ell + d - 4)$, which reduces to $1/(2\ell - 1)$ in $d=3$ and yields $\tfrac{1}{3}$ at $\ell{=}2$, $\tfrac{1}{5}$ at $\ell{=}3$, $\tfrac{1}{7}$ at $\ell{=}4$ \citep{applequist1989traceless,thorne1980multipole}. The fully detraced closed form for arbitrary rank is \citet[Theorem~2]{applequist1989traceless}.

\textbf{Degrees-of-freedom check for STF$_3$.} The symmetric rank-3 tensor $T^{\text{sym}}$ in $\mathbb{R}^3$ has $\binom{3+3-1}{3} = 10$ independent components; subtracting the 3-component trace vector $A$ leaves $10 - 3 = 7 = 2\ell + 1$ for $\ell = 3$, matching the dimension of the $L{=}3$ irrep of $\mathrm{SO}(3)$.

\subsection{Explicit coordinate formulas for the constructive CG maps}\label{supp:cg_proof}

We enumerate every constructive map used in the architecture and verified in Supplementary~\ref{supp:cg_test}. Symmetric traceless projection is a textbook construction \citep{coope1965irreducible,applequist1989traceless,stone2013theory}; rank-3 STF projectors appear in \citet{thorne1980multipole}.

\textbf{$\mathbf{1} \otimes \mathbf{1} \to \mathbf{0} \oplus \mathbf{1} \oplus \mathbf{2}$.}
For $u, v \in \mathbb{R}^3$:
\begin{itemize}
    \item $L{=}0$: $(u \otimes v)_{L=0} = u \cdot v = u_iv_i$.
    \item $L{=}1$: $(u \otimes v)_{L=1} = \star(u \wedge v)_i = \varepsilon_{ijk}u_jv_k$, equivalently the cross product $u \times v$.
    \item $L{=}2$: $(u \otimes v)_{L=2} = \mathrm{STF}_2(u,v)_{ij} = \tfrac{1}{2}(u_iv_j + u_jv_i) - \tfrac{1}{3}(u\cdot v)\delta_{ij}$.
\end{itemize}

\textbf{$\mathbf{2} \otimes \mathbf{1} \to \mathbf{1} \oplus \mathbf{2} \oplus \mathbf{3}$.}
For a symmetric traceless $S \in \mathbb{R}^{3\times 3}$ and $v \in \mathbb{R}^3$:
\begin{itemize}
    \item $L{=}1$: $(S \cdot v)_i = S_{ij}v_j$.
    \item $L{=}2$: symmetric traceless part of $\varepsilon_{ikl}S_{jl}v_k$, i.e.\ $[\mathrm{sym}(S\times v) - \tfrac{1}{3}\mathrm{tr}(\cdot)\mathbb{I}]$; explicit 5-component formulas are in the code release (\texttt{contract\_stf2\_vec\_to\_stf2}).
    \item $L{=}3$: $\mathrm{STF}_3(S, v)_{ijk} = S_{(ij}v_{k)} - \tfrac{1}{5}(\delta_{ij}A_k + \delta_{ik}A_j + \delta_{jk}A_i)$ with $A_k = \tfrac{2}{3}S_{kl}v_l$; the trace coefficient $\tfrac{2}{3}$ follows from tracelessness of $S$ and the detracing coefficient $\tfrac{1}{5}$ from $1/(2\ell{-}1)|_{\ell=3}$ in $d{=}3$ \citep{applequist1989traceless,thorne1980multipole}.
\end{itemize}

\textbf{$\mathbf{2} \otimes \mathbf{2} \to \mathbf{0} \oplus \mathbf{1} \oplus \mathbf{2} \oplus \mathbf{3}$ (with $L{=}4$ omitted).}
For symmetric traceless $S_1, S_2$:
\begin{itemize}
    \item $L{=}0$: $\langle S_1, S_2 \rangle_F = S_{1,ij}S_{2,ij}$ (Frobenius inner product).
    \item $L{=}1$: antisymmetric contraction $\varepsilon_{ijk}S_{1,jl}S_{2,kl}$.
    \item $L{=}2$: symmetric traceless projection of $S_{1,ik}S_{2,kj}$.
    \item $L{=}3$: an analogous symmetric-then-detraced construction treating one operand as a vector-indexed rank-2 tensor.
    \item $L{=}4$: outside our $L_{\max}{=}3$ cut and not used.
\end{itemize}

\textbf{$\mathbf{3} \otimes \mathbf{1} \to \mathbf{2} \oplus \mathbf{3}$ (with $L{=}4$ omitted).}
For a symmetric traceless rank-3 tensor $T$ and vector $v$:
\begin{itemize}
    \item $L{=}2$: $(T \cdot v)_{ij} = T_{ijk}v_k$ followed by trace removal (\texttt{contract\_stf3\_vec\_to\_stf2}).
    \item $L{=}3$: analogous symmetric-then-detraced construction.
    \item $L{=}4$: omitted.
\end{itemize}

Each map is a polynomial in its inputs with constant rational coefficients. There are no learned Clebsch--Gordan coefficients, no Wigner-$D$ tables, and no calls to \texttt{e3nn} \citep{geiger2022e3nn}. All tensor products of $\mathrm{SO}(3)$ irreps above are multiplicity-free, consistent with the general $\mathrm{SU}(2)$ fact that $L_1 \otimes L_2 = \bigoplus_{L=|L_1-L_2|}^{L_1+L_2} L$ each with multiplicity one \citep{varshalovich1988quantum}.

\subsection{Numerical verification of the constructive CG maps and end-to-end equivariance}\label{supp:cg_test}

\textbf{Primitive-level verification.} We verify Theorem~\ref{thm:cg_completeness} and its higher-order extensions numerically by comparing each constructive map against a reference implementation based on Wigner $3j$-symbols acting on spherical-harmonic representations. For each admissible triple $(\ell_1, \ell_2, \ell_3)$ with $\ell_i \leq 3$ and $|\ell_1 - \ell_2| \leq \ell_3 \leq \ell_1 + \ell_2$, we (i) sample random input tensors of matching ranks, (ii) convert to the spherical basis via the orthonormal Cartesian-to-spherical change of basis of \citet{shao2025orthonormal}, (iii) apply the $\ell_1 \otimes \ell_2 \to \ell_3$ CG coupling using standard Wigner $3j$-symbols, (iv) convert back to the Cartesian basis, and (v) compare against the output of our constructive map. The maximum absolute error across all tested channels and $10^3$ random samples is below $10^{-11}$ in double precision; numerical logs are included with the code release.

\textbf{End-to-end equivariance test.} As a complementary end-to-end check, a smoke-test script applies a random $\mathrm{SO}(3)$ rotation $Q$ to every input position, rebuilds the radius graph, and verifies that (i) energies are unchanged ($|E(\mathrm{pos}) - E(Q\,\mathrm{pos})| < 10^{-3}$) and (ii) predicted forces rotate consistently ($\lVert F(Q\,\mathrm{pos}) - Q F(\mathrm{pos})\rVert_\infty < 10^{-3}$) across all ablation configurations. Equivariance is confirmed to expected numerical precision for every variant in Table~\ref{tab:ablation}.

\subsection{Edge embedding}\label{supp:edge_embed}

Pairwise distances $d_{ij} = \lVert\mathbf{r}_{ij}\rVert$ are expanded through a bank of Gaussian radial basis functions (RBFs) centred on evenly spaced reference distances between 0 and the cutoff $r_c$, followed by a smooth cosine envelope $\phi_c(d_{ij}) = \tfrac{1}{2}[\cos(\pi d_{ij}/r_c) + 1]$ that drives edge features smoothly to zero at the cutoff. Two small MLPs map the RBF expansion to channel-wise scalar weights for the grade-0 part and channel-wise weights for the grade-1 part of the edge multivector; the grade-1 direction $\hat{r}_{ij}$ is then multiplied by the channel weights and placed in the vector component. For CliffordSTF variants with $\mathrm{stf\_mode} \neq \text{none}$, a third MLP produces channel weights for the STF$_2$ edge feature, populated with $\mathrm{STF}_2(\hat{r}_{ij}, \hat{r}_{ij})$ (a canonical $L{=}2$ descriptor of the edge direction). When $\mathrm{stf\_mode}=\text{stf2+stf3}$, the STF$_3$ edge block is initialised to zero and is populated only via message passing within node features, since direction alone does not carry independent $L{=}3$ information.

\subsection{Augmented product at the many-body stage}\label{supp:aug_prod}

At the many-body stage of each interaction layer, an augmented product couples the Clifford and STF tracks through six bilinear terms. Given operands $a$ and $b$ each carrying Clifford, STF$_2$, and (optionally) STF$_3$ components, the augmented product computes:
\begin{itemize}
    \item The ordinary geometric product $a^{\mathrm{Cl}} b^{\mathrm{Cl}}$ via the $8\times8$ Cayley table;
    \item A new STF$_2$ from the grade-1 components via $\mathrm{STF}_2(a^{\mathrm{vec}}, b^{\mathrm{vec}})$;
    \item An updated STF$_2$ via the $L{=}2 \otimes L{=}1 \to L{=}2$ path, applied from either operand (symmetrised-then-detraced cross-contraction of $a^{\mathrm{STF}_2}$ with $b^{\mathrm{vec}}$, and vice versa);
    \item An $L{=}0$ contribution to grade-0 via $\langle a^{\mathrm{STF}_2}, b^{\mathrm{STF}_2}\rangle_F = a_{ij}^{\mathrm{STF}_2} b_{ij}^{\mathrm{STF}_2}$;
    \item A new STF$_3$ via $\mathrm{STF}_3(a^{\mathrm{STF}_2}, b^{\mathrm{vec}})$;
    \item An STF$_2$ update via $\mathrm{STF}_3 \otimes L{=}1 \to L{=}2$ contraction of $a^{\mathrm{STF}_3}$ against $b^{\mathrm{vec}}$.
\end{itemize}
Each term corresponds bijectively to a CG channel in one of the decompositions $\mathbf{1}\otimes\mathbf{1}$, $\mathbf{2}\otimes\mathbf{1}$, $\mathbf{2}\otimes\mathbf{2}$, or $\mathbf{3}\otimes\mathbf{1}$ enumerated in Supplementary~\ref{supp:cg_proof}. The cross-track gradient pathway created by these bilinears is what provides a usable training signal to the STF parameters, as demonstrated by the ablation (\S\ref{sec:results_scalar}).

\subsection{Equivariant normalisation and gating}\label{supp:normalization}

Each CliffordSTFLinear layer is block-diagonal across tracks: it applies independent linear maps of shape $C_{\mathrm{in}} \to C_{\mathrm{out}}$ over channels within the Clifford multivector, the STF$_2$ tensor, and the STF$_3$ tensor, without mixing them linearly. Cross-track information flow occurs only through the bilinear products described in \S\ref{sec:architecture}.

Track-wise equivariant normalisation rescales each track by a per-channel invariant norm. For the Clifford track we use the standard per-grade Clifford norm $\lVert h^{(k)}\rVert$ computed as the scalar part of $h^{(k)}\widetilde{h^{(k)}}$, where $\widetilde{\cdot}$ is the reverse \citep{doran2003geometric}; for STF$_2$ we use the Frobenius norm $\lVert S\rVert_F^2 = S_{ij}S_{ij}$ with $S_{zz} = -S_{xx}-S_{yy}$ substituted; for STF$_3$ we use the analogous Frobenius norm with multiplicity factors reflecting the symmetric tensor structure. All three norms are $\mathrm{SO}(3)$-invariant scalars.

Norm-gated activation applies SiLU to grade-0 features and a sigmoid-gated rescaling to each $L \geq 1$ track: for a feature $f$ at rank $\ell \geq 1$, we compute $g = \sigma(\mathrm{MLP}(\lVert f\rVert))$ with the sigmoid MLP acting on the invariant norm, and output $g \cdot f$. Since $g$ is a scalar and $f$ is equivariant, the product is equivariant. This is the norm-gating scheme of \citet{weiler20183d,schutt2021painn} adapted to the three tracks.

\subsection{Adaptive \texorpdfstring{$L$}{L}-track routing}\label{supp:routing}

Adaptive routing reduces effective angular bandwidth for atoms in symmetric environments. It is applied as a per-atom, per-track multiplicative gate $g_\ell \in [0, 1]$ between message-passing layers, rescaling the STF$_\ell$ track as $h^{\mathrm{STF}_\ell}_i \leftarrow g_\ell^{(i)} \cdot h^{\mathrm{STF}_\ell}_i$. The Clifford track is always active. Three modes are supported.
\begin{itemize}
    \item \textbf{None.} All atoms use all tracks; $g_\ell^{(i)} = 1$.
    \item \textbf{Static.} Gates are learned per atomic type via a small embedding table. Useful when chemically distinct atoms have systematically different angular-resolution needs.
    \item \textbf{Learned.} Gates are produced by a two-layer MLP with sigmoid output applied to four per-atom invariant descriptors: (i) coordination number $|\mathcal{N}(i)|$, (ii) angular variance $1 - \lVert\bar{r}_i\rVert$ where $\bar{r}_i = \tfrac{1}{|\mathcal{N}(i)|}\sum_{j\in\mathcal{N}(i)}\hat{r}_{ij}$, (iii) mean neighbour distance, (iv) grade-0 feature norm averaged over channels. All four descriptors are $\mathrm{SO}(3)$-invariant, so the gate is invariant and the gated product remains equivariant.
\end{itemize}

\subsection{Full ablation configurations}\label{supp:variants}

CliffordSTF is a single configurable architecture parameterised by five ablation flags: $\mathrm{stf\_mode} \in \{\text{none}, \text{stf2}, \text{stf2+stf3}\}$, $\mathrm{use\_hodge\_forces} \in \{\text{off}, \text{on}\}$, $\mathrm{use\_adaptive\_routing} \in \{\text{off}, \text{on}\}$, $\mathrm{routing\_mode} \in \{\text{none}, \text{static}, \text{learned}\}$, and $\mathrm{use\_cross\_track} \in \{\text{off}, \text{on}\}$. Table~\ref{tab:ablation} enumerates the eleven named variants used in the ablation study, alongside the vanilla $L{=}1$ Clifford baseline (which is recovered bit-for-bit by setting all five flags to their off/none state and is shown as the twelfth row of Table~\ref{tab:ablation} for comparison). Ten of the eleven variants are flag-parameterised; the eleventh, \emph{STF only output}, is a structural ablation outside the flag space: it removes the STF tracks from the message-passing scaffold entirely while retaining an STF-augmented force-readout head, isolating whether the directional gain comes from $L\!\geq\!2$ representations operating throughout message passing or from an $L\!\geq\!2$ readout alone.


\subsection{Caveats and per-model exclusions for headline tables}\label{supp:results_caveats}

This subsection holds the per-model reasoning behind the caveat marks ($\ast$, $\dagger$, gray rows) used in Tables~\ref{tab:md17_fcos} and~\ref{tab:oc22}. The marks themselves are explained in one phrase each in the table captions; this subsection expands each justification.

\textbf{Strictly invariant models (SchNet, DimeNet++).} Both architectures are scalar-only and compute forces as the negative gradient of the predicted energy with respect to atomic positions, rather than from a direct-force head. The cosine similarity of gradient-derived forces depends on the energy--force loss balancing during training; under our uniform protocol this balance is not tuned per architecture, and gradient-derived directional fidelity is not a comparable diagnostic against models that predict forces directly. We render their rows in gray in Table~\ref{tab:md17_fcos} and exclude them from the per-architecture directional narrative in \S\ref{sec:results_md17}.

\textbf{FAENet.} FAENet predicts forces directly from a learned head but achieves equivariance via stochastic frame averaging rather than through architectural constraints on the message-passing operations themselves. Because the frames are sampled at training and inference time, per-sample force predictions are not strictly equivariant and per-sample directional fidelity is not strictly defined. We list FAENet in its natural $L\!\leq\!1$ cohort with an $\ast$ mark and exclude it from the per-architecture directional comparison.

\textbf{GotenNet.} GotenNet is an $L\!\geq\!2$ architecture ($\mathtt{lmax}{=}2$) with a direct-force head, and would naturally belong in the top block of Table~\ref{tab:md17_fcos}. Under our fixed-budget protocol it did not converge on rMD17. The published GotenNet recipe matches our learning rate ($10^{-4}$) but additionally relies on a $10{,}000$-step learning-rate warmup, an on-plateau learning-rate scheduler, and early-stopping patience $150$, none of which are part of our fixed-budget protocol (constant LR, patience $30$, no warm-up). We report the GotenNet result in its natural $L$-cohort with an $\ast$ mark for completeness and exclude it from the per-architecture directional narrative; the underperformance reflects protocol mismatch rather than architecture quality at recommended hyperparameters.

\textbf{EquiformerV2 absolute-energy prediction.} EquiformerV2 predicts absolute (non-referenced) molecular energies rather than energy differences relative to a per-element reference. Its rMD17 energy MAE therefore reflects the magnitude of the absolute binding energy ($\sim$247\,eV, consistent across all three $L_{\max}$ settings) rather than a prediction error, and is not directly comparable to the energy-referenced MAEs of the other rows. We mark this with $\dagger$ in Table~\ref{tab:md17_fcos} and in the corresponding panel of Supplementary~\ref{supp:md17_permol}; the force-cosine similarity column, which is independent of the energy-referencing convention, remains directly comparable across all rows.

\textbf{Fixed-budget non-convergence (MACE, ICTP).} Under our uniform $10^6$-parameter budget, MACE and ICTP at every reported $L_{\max}$ attain force MAE in the 3{,}000--6{,}000\,meV/\AA{} range on rMD17 and energy MAE in the 8--22\,eV range on OC22. Both architectures are designed to operate at substantially larger capacity (typically $5{-}20\times 10^6$ parameters under recommended hyperparameters), and the underperformance reported here reflects the fixed-budget protocol rather than architecture quality at native scale. We report the values honestly in the tables as part of the within-protocol comparison the paper makes (\S\ref{sec:benchmarks}); readers interested in MACE/ICTP at recommended hyperparameters should consult the original publications.

\subsection{Training protocol and parameter budgets}\label{supp:protocol}

\textbf{Parameter budgets and shared architectural hyperparameters.} All Clifford and CliffordSTF models evaluated in the main text use at most $1.06\times 10^6$ parameters. Channel counts are tuned from $C{=}80$ (base Clifford, 8D features) down to $C{=}60$ for the $D{=}13$ and $D{=}20$ STF variants to keep the parameter budget within the $10^6$ target. All other architectural hyperparameters are held fixed across variants: cutoff $r_c = 6.0$\,\AA, 50 Gaussian RBF bases, maximum 50 neighbours per atom, 5 message-passing layers, body order up to 4. Exact per-model parameter counts are listed in Table~\ref{tab:wallclock}.

\textbf{Shared optimisation settings.} All models are trained with Adam at learning rate $10^{-4}$, no weight decay, early stopping at patience 30, and no EMA or warm-up, as stated in \S\ref{sec:benchmarks}. Details are shared in Table \ref{tab:training-details}.

\begin{table}[t]
  \centering
  \small
  \caption{Per-benchmark training settings. All runs use Adam with base learning rate $1{\times}10^{-4}$, no weight decay, and early-stopping patience of $30$ epochs (\S4.1). OC20-S2EF and OC22-S2EF share the same energy+force loss formulation; the OC20-IS2RE and OC22-IS2RE rows are the deliberate exception (single scalar adsorption energy, L1 loss). $\lambda_F$ is the force-loss weight.}
  \label{tab:training-details}
  \resizebox{\textwidth}{!}{%
  \begin{tabular}{llrrlllr}
    \toprule
    Benchmark & Task & Epochs & Batch & LR schedule & Loss & $\lambda_F$ & Seeds \\
    \midrule
    QM9        & scalar --- 6 properties\textsuperscript{$\ast$}      & 250 & 128 & constant                                & MSE                                 & ---  & 3 \\
    Molecule3D & scalar --- HOMO--LUMO gap                            & 100 & 20  & StepLR (step $10$, $\gamma{=}0.8$)      & L1                                  & ---  & 3 \\
    rMD17       & energy + forces (10 molecules)                        & 250 & 32  & constant                                & MSE                                 & 1.0  & 3$^{\dagger}$ \\
    OC20       & IS2RE                                                & 50  & 16  & constant                                & L1                                  & ---  & 3 \\
    OC20       & S2EF                                                 & 50  & 16  & constant                                & per-atom MAE\,(E) $+$ $\ell_2$\,(F) & 3.0  & 3 \\
    OC22       & IS2RE                                                & 150 & 16  & constant                                & L1                                  & ---  & 3 \\
    OC22       & S2EF                                                 & 50  & 16  & constant                                & per-atom MAE\,(E) $+$ $\ell_2$\,(F) & 3.0  & 3 \\
    \bottomrule
  \end{tabular}}

  \vspace{0.4em}
  \begin{minipage}{\linewidth}
    \footnotesize
    $^{\ast}$QM9 properties: $\mu$, $\alpha$, $\varepsilon_\mathrm{HOMO}$, $\varepsilon_\mathrm{LUMO}$, $U_0$, $C_v$, trained as independent single-target models.$^{\dagger}$rMD17 Clifford + CliffordSTF ablation (Table~\ref{tab:ablation}): 12 architectural variants tuned to ${\sim}1$M parameters, evaluated on aspirin, benzene, ethanol, and salicylic acid with 6 seeds per (variant, molecule) obtained as 3 random seeds $\times$ 2 independent instances. Epochs, batch size, LR schedule, loss, force weight, and patience match the main rMD17 row.\emph{Preprocessing.} QM9 uses Z-score target normalisation and an 80/10/10 random split. rMD17 uses a 900/100/1000 train/validation/test random split; forces are predicted by a direct force head when the model exposes one (e.g.\ PaiNN, GotenNet) and are otherwise computed as the negative gradient of the energy with respect to atomic positions. Molecule3D uses a single random split over the ${\sim}780$k-molecule subset. OC20 and OC22 use a $6.0$~\AA{} neighbour cutoff; the S2EF rows additionally train and evaluate on free (non-fixed) atoms only, while the IS2RE rows do not apply this filter. OC22-IS2RE is trained for longer ($150$ epochs vs.\ $50$ for the other OC tasks) because the smaller scalar training signal converges more slowly. Each OC run uses official $200{,}000$ training split; the final evaluation reported in the paper uses the full validation and held-out splits.
  \end{minipage}
\end{table}

\subsection{Model architecture hyperparameters per benchmark}\label{supp:model_settings}

Tables~\ref{tab:supp_settings_md17}--\ref{tab:supp_settings_md17_ablation} list the architectural hyperparameters used per benchmark for every model in the main study. Cutoffs are in \AA{}; Boolean flags are abbreviated T/F. Training-time settings (optimiser, epochs, batch size, loss, seeds) are reported separately in Table~\ref{tab:training-details}; per-model parameter counts are in Table~\ref{tab:wallclock}.

\subsubsection{MD17}

Table~\ref{tab:supp_settings_md17} lists the per-model architecture hyperparameters used on rMD17. MACE, ICTP, and EquiformerV2 are run at three $L_{\max}$ values to trace angular resolution within each family at the matched ${\sim}10^6$-parameter budget, with EquiformerV2 reducing its sphere channel count from $40$ at $L_{\max}\!=\!2$ to $14$ at $L_{\max}\!=\!4$ to absorb the additional irreps. CliffordSTF runs at $C\!=\!60$ channels (versus $C\!=\!48$ for the base $L\!\leq\!1$ Clifford configuration on MD17) so that the STF$_2$ and STF$_3$ tracks fit within the same parameter envelope; cutoffs are $5$\,\AA{} for the equivariant transformers (TorchMD-Net, ViSNet) and NequIP, and $6$\,\AA{} elsewhere.

\begin{table}[h]
\centering
\caption{Model architecture hyperparameters used on \textbf{MD17}. All runs use Adam, no weight decay, batch size 32, learning rate 0.0001, 250 epochs, MSE loss, early-stopping patience 30. Cutoff is in \AA; Boolean flags are abbreviated T/F.}
\label{tab:supp_settings_md17}
\footnotesize
\setlength{\tabcolsep}{4pt}
\resizebox{\textwidth}{!}{%
\begin{tabular}{l c l}
\toprule
Model & Cutoff & Architecture \\
\midrule
SchNet & 6 & hidden=192, filters=192, layers=6, gauss=50 \\
PaiNN & 6 & hidden=144, layers=4, rbf=64 \\
DimeNet++ & 6 & hidden=128, blocks=4, radial=6, spherical=7, basis emb=8, int emb=64, out emb=128 \\
FAENet & 6 & hidden=320, filters=128, layers=4, FA=3D, FA-method=stochastic, MP-type=updownscale\_base \\
GotenNet & 6 & basis=96, layers=5 \\
TorchMD-Net & 5 & emb=104, layers=6, heads=8, rbf=56, rbf type=expnorm, arch=equivariant-transformer \\
ViSNet & 5 & hidden=96, layers=6, heads=8, rbf=32, $L_{\max}$=1, vertex=F \\
NequIP & 5 & features=44, layers=4, $L_{\max}$=1, bessels=8, parity=T \\
MACE L=1 & 6 & hidden=62x0e + 62x1o, layers=2, max ell=1, corr=3, bessel=8 \\
MACE L=2 & 6 & hidden=49x0e + 49x1o + 49x2e, layers=2, max ell=2, corr=3, bessel=8 \\
MACE L=3 & 6 & hidden=28x0e + 28x1o + 28x2e + 28x3o, layers=2, max ell=3, corr=3, bessel=8 \\
ICTP L=1 & 6 & hidden=10, layers=2, $L_{\max}$ hid=1, $L_{\max}$ edge=3, corr=3, basis=8, prod=10 \\
ICTP L=2 & 6 & hidden=8, layers=2, $L_{\max}$ hid=2, $L_{\max}$ edge=3, corr=3, basis=8, prod=8 \\
ICTP L=3 & 6 & hidden=7, layers=2, $L_{\max}$ hid=3, $L_{\max}$ edge=3, corr=3, basis=8, prod=7 \\
EquiformerV2 L=2 & 6 & sphere=40, layers=2, heads=4, $L_{\max}$=2, $M_{\max}$=2, attn $\alpha$=16, attn hid=20, attn val=8, ffn=64 \\
EquiformerV2 L=3 & 6 & sphere=24, layers=2, heads=4, $L_{\max}$=3, $M_{\max}$=2, attn $\alpha$=16, attn hid=20, attn val=8, ffn=64 \\
EquiformerV2 L=4 & 6 & sphere=14, layers=2, heads=4, $L_{\max}$=4, $M_{\max}$=2, attn $\alpha$=16, attn hid=20, attn val=8, ffn=64 \\
Clifford & 6 & channels=48, layers=5, rbf=50, hidden out=48 \\
CliffordSTF & 6 & channels=60, layers=5, rbf=50, STF mode=stf2+stf3, routing=learned, adapt rt=T, cross=T, Hodge F=T, GP rdo=F, self-int=F \\
\bottomrule
\end{tabular}%
}
\end{table}

\subsubsection{OC20 \& OC22 (IS2RE \& S2EF)}

Table~\ref{tab:supp_settings_oc} lists the architectures used for OC20 and OC22 on both IS2RE and S2EF; the per-model architectural hyperparameters are identical between the two benchmarks (same channel widths, layer counts, $L_{\max}$ choices, and Clifford-track configuration on every model), so the OC20 $\to$ OC22 comparison isolates dataset effects rather than architectural differences. The two benchmarks differ only in their per-model training schedules, captured in the caption. All models share a $6$\,\AA{} cutoff; MACE and ICTP are run at $L_{\max}\!\in\!\{1, 2\}$ (the $L\!=\!3$ configurations from rMD17 are dropped because they did not fit within the budget at OC graph density), and EquiformerV2 uses a single $L_{\max}\!=\!4$ configuration with a smaller sphere size (sphere=$18$, versus $14$--$40$ on rMD17) to keep parameter count near $10^6$. The base Clifford row is configured with $\mathrm{max\_grade}\!=\!1$ and the $\ell\!=\!2$ flag disabled, recovering the $L\!\leq\!1$ multivector substrate that the CliffordSTF $L\!\geq\!2$ extension is benchmarked against in the main results.

\begin{table}[h]
\centering
\caption{Model architecture hyperparameters used on \textbf{OC20 \& OC22 (IS2RE \& S2EF)}. The per-model architecture is identical between OC20 and OC22; only training schedules differ. \textbf{OC20:} all runs use Adam, no weight decay, batch size 16, learning rate 0.0001, 50 epochs, early-stopping patience 30; L1 loss on IS2RE for all models, MSE loss on S2EF for all models. \textbf{OC22:} all runs use Adam, no weight decay, batch size 16, learning rate 0.0001; on IS2RE, 200 epochs and patience 50; on S2EF, 50 epochs, MSE loss, patience 30 for all models. Cutoff is in \AA; Boolean flags are abbreviated T/F. \emph{Bottom block:} the $10^7$-parameter scaling-check configurations}
\label{tab:supp_settings_oc}
\footnotesize
\setlength{\tabcolsep}{4pt}
\resizebox{\textwidth}{!}{%
\begin{tabular}{l c l}
\toprule
Model & Cutoff & Architecture \\
\midrule
SchNet & 6 & hidden=192, filters=192, layers=6, gauss=50 \\
PaiNN & 6 & hidden=144, layers=4, rbf=64 \\
DimeNet++ & 6 & hidden=128, blocks=4, radial=6, spherical=7, basis emb=8, int emb=64, out emb=128 \\
GotenNet & 6 & basis=96, layers=5 \\
TorchMD-Net & 6 & emb=104, layers=6, heads=8, rbf=56, rbf type=expnorm, arch=equivariant-transformer \\
NequIP & 6 & features=44, layers=4, $L_{\max}$=1, bessels=8, parity=T \\
MACE L=1 & 6 & hidden=62x0e + 62x1o, layers=2, max ell=1, corr=3, bessel=8 \\
MACE L=2 & 6 & hidden=49x0e + 49x1o + 49x2e, layers=2, max ell=2, corr=3, bessel=8 \\
ICTP L=1 & 6 & hidden=10, layers=2, $L_{\max}$ hid=1, $L_{\max}$ edge=3, corr=3, basis=8, prod=10 \\
ICTP L=2 & 6 & hidden=8, layers=2, $L_{\max}$ hid=2, $L_{\max}$ edge=3, corr=3, basis=8, prod=8 \\
EquiformerV2 & 6 & sphere=18, layers=2, heads=4, $L_{\max}$=4, $M_{\max}$=2, attn $\alpha$=16, attn hid=20, attn val=8, ffn=64 \\
Clifford & 6 & channels=80, layers=5, rbf=50, hidden out=80, max grade=1, $\ell{=}2$=F \\
CliffordSTF & 6 & channels=60, layers=5, rbf=50, STF mode=stf2+stf3, routing=learned, adapt rt=T, cross=T, Hodge F=T, GP rdo=F, self-int=F \\
\hline
Clifford-10M       & 6 & channels=200, layers=6, rbf=50, hidden out=256, max grade=3, $\ell{=}2$=T, direct forces=T \\
CliffordSTF-10M    & 6 & channels=188, layers=5, rbf=50, hidden out=256, STF mode=stf2+stf3, routing=learned, adapt rt=T, cross=T, Hodge F=T, GP rdo=F, self-int=F \\
EquiformerV2 $L{=}2$-10M & 6 & sphere=160, layers=6, heads=4, $L_{\max}$=2, $M_{\max}$=2, attn $\alpha$=16, attn hid=80, attn val=8, ffn=280 \\
MACE $L{=}2$-10M   & 6 & hidden=190x0e + 190x1o + 190x2e, layers=2, max ell=2, corr=3, bessel=8 \\

\bottomrule
\end{tabular}%
}
\end{table}

\subsubsection{QM9}

Table~\ref{tab:supp_settings_qm9} lists the architectures used for the QM9 scalar-target benchmark. The QM9 model set is a subset of the rMD17 lineup: MACE and ICTP are not evaluated, EquiformerV2 is reported at a single $L_{\max}\!=\!4$ configuration with a larger spherical-channel count (sphere=$30$) than its catalysis runs, and the base $L\!\leq\!1$ Clifford row is omitted because the directional diagnostic that motivates its inclusion on rMD17 does not bear directly on scalar prediction. CliffordSTF uses the same $C\!=\!60$ channel count and the same STF$_2$+STF$_3$ configuration as on the other benchmarks.

\begin{table}[h]
\centering
\caption{Model architecture hyperparameters used on \textbf{QM9}. All runs use Adam, no weight decay, batch size 128, learning rate 0.0001, 250 epochs, MSE loss, early-stopping patience 30. Cutoff is in \AA; Boolean flags are abbreviated T/F.}
\label{tab:supp_settings_qm9}
\footnotesize
\setlength{\tabcolsep}{4pt}
\resizebox{\textwidth}{!}{%
\begin{tabular}{l c l}
\toprule
Model & Cutoff & Architecture \\
\midrule
SchNet & 6 & hidden=192, filters=192, layers=6, gauss=50 \\
PaiNN & 6 & hidden=144, layers=4, rbf=64 \\
DimeNet++ & 6 & hidden=128, blocks=4, radial=6, spherical=7, basis emb=8, int emb=64, out emb=128 \\
FAENet & 6 & hidden=320, filters=128, layers=4, FA=3D, FA-method=stochastic, MP-type=updownscale\_base \\
GotenNet & 6 & basis=96, layers=5 \\
TorchMD-Net & 5 & emb=104, layers=6, heads=8, rbf=56, rbf type=expnorm, arch=equivariant-transformer \\
ViSNet & 5 & hidden=96, layers=6, heads=8, rbf=32, $L_{\max}$=1, vertex=F \\
NequIP & 5 & features=44, layers=4, $L_{\max}$=1, bessels=8, parity=T \\
EquiformerV2 & 6 & sphere=30, layers=2, heads=4, $L_{\max}$=4, $M_{\max}$=2, attn $\alpha$=16, attn hid=20, attn val=8, ffn=64 \\
CliffordSTF & 6 & channels=60, layers=5, rbf=50, STF mode=stf2+stf3, routing=learned, adapt rt=T, cross=T, Hodge F=T, GP rdo=F, self-int=F \\
\bottomrule
\end{tabular}%
}
\end{table}

\subsubsection{Molecule3D}

Table~\ref{tab:supp_settings_molecule3d} lists the architectures used for the Molecule3D HOMO--LUMO gap benchmark. The model set is identical to QM9 with ViSNet additionally not reported; per-model architectural hyperparameters are otherwise unchanged from QM9 (same channel counts, layer counts, and $L_{\max}$ values per model), isolating the inter-benchmark difference to dataset scale and target rather than architecture.

\begin{table}[h]
\centering
\caption{Model architecture hyperparameters used on \textbf{Molecule3D}. All runs use Adam, no weight decay, batch size 20, learning rate 0.0001, 100 epochs, L1 loss, early-stopping patience 30. Cutoff is in \AA; Boolean flags are abbreviated T/F.}
\label{tab:supp_settings_molecule3d}
\footnotesize
\setlength{\tabcolsep}{4pt}
\resizebox{\textwidth}{!}{%
\begin{tabular}{l c l}
\toprule
Model & Cutoff & Architecture \\
\midrule
SchNet & 6 & hidden=192, filters=192, layers=6, gauss=50 \\
PaiNN & 6 & hidden=144, layers=4, rbf=64 \\
DimeNet++ & 6 & hidden=128, blocks=4, radial=6, spherical=7, basis emb=8, int emb=64, out emb=128 \\
FAENet & 6 & hidden=320, filters=128, layers=4, FA=3D, FA-method=stochastic, MP-type=updownscale\_base \\
GotenNet & 6 & basis=96, layers=5 \\
TorchMD-Net & 5 & emb=104, layers=6, heads=8, rbf=56, rbf type=expnorm, arch=equivariant-transformer \\
NequIP & 5 & features=44, layers=4, $L_{\max}$=1, bessels=8, parity=T \\
EquiformerV2 & 6 & sphere=30, layers=2, heads=4, $L_{\max}$=4, $M_{\max}$=2, attn $\alpha$=16, attn hid=20, attn val=8, ffn=64 \\
CliffordSTF & 6 & channels=60, layers=5, rbf=50, STF mode=stf2+stf3, routing=learned, adapt rt=T, cross=T, Hodge F=T, GP rdo=F, self-int=F \\
\bottomrule
\end{tabular}%
}
\end{table}

\subsubsection{MD17 ablation}

Table~\ref{tab:supp_settings_md17_ablation} lists the architectural hyperparameters of the twelve CliffordSTF ablation variants together with the vanilla $L\!\leq\!1$ Clifford baseline used in the controlled rMD17 ablation (Table~\ref{tab:ablation}). Channel counts are tuned per variant ($C\!\in\!\{60, 64, 76, 80\}$) so that every variant lands within $\pm 50$\% of the $10^6$-parameter target; the five ablation flags ($\mathrm{stf\_mode}$, $\mathrm{routing\_mode}$, $\mathrm{adapt\_rt}$, $\mathrm{cross}$, $\mathrm{Hodge\_F}$) toggle the architectural extensions documented in Supplementary~\ref{supp:variants}. Setting all flags to their off/none state recovers the base Clifford row bit-for-bit, providing the clean $L\!\leq\!1$ baseline against which every CliffordSTF extension is measured.

\begin{table}[h]
\centering
\caption{Model architecture hyperparameters used on \textbf{MD17 ablation}. All runs use Adam, no weight decay, batch size 32, learning rate 0.0001, 250 epochs, MSE loss, early-stopping patience 30. Cutoff is in \AA; Boolean flags are abbreviated T/F.}
\label{tab:supp_settings_md17_ablation}
\footnotesize
\setlength{\tabcolsep}{4pt}
\resizebox{\textwidth}{!}{%
\begin{tabular}{l c l}
\toprule
Model & Cutoff & Architecture \\
\midrule
STF (full) & 6 & channels=60, layers=5, rbf=50, STF mode=stf2+stf3, routing=learned, adapt rt=T, cross=T, Hodge F=T, GP rdo=F, self-int=F \\
STF$_2$ (static routing) & 6 & channels=64, layers=5, rbf=50, STF mode=stf2, routing=static, adapt rt=T, cross=T, Hodge F=T, GP rdo=F, self-int=F \\
STF baseline & 6 & channels=76, layers=5, rbf=50, STF mode=none, routing=none, adapt rt=F, cross=F, Hodge F=F, GP rdo=F, self-int=F \\
STF$_2$ (learned routing) & 6 & channels=64, layers=5, rbf=50, STF mode=stf2, routing=learned, adapt rt=T, cross=T, Hodge F=T, GP rdo=F, self-int=F \\
STF$_2$+STF$_3$ & 6 & channels=60, layers=5, rbf=50, STF mode=stf2+stf3, routing=none, adapt rt=F, cross=T, Hodge F=T, GP rdo=F, self-int=F \\
STF (full, no cross-track) & 6 & channels=60, layers=5, rbf=50, STF mode=stf2+stf3, routing=none, adapt rt=F, cross=F, Hodge F=T, GP rdo=F, self-int=F \\
STF (Hodge only) & 6 & channels=76, layers=5, rbf=50, STF mode=none, routing=none, adapt rt=F, cross=F, Hodge F=T, GP rdo=F, self-int=F \\
STF$_2$ (no Hodge) & 6 & channels=64, layers=5, rbf=50, STF mode=stf2, routing=none, adapt rt=F, cross=T, Hodge F=F, GP rdo=F, self-int=F \\
STF$_2$ (no cross-track) & 6 & channels=64, layers=5, rbf=50, STF mode=stf2, routing=none, adapt rt=F, cross=F, Hodge F=T, GP rdo=F, self-int=F \\
STF$_2$ & 6 & channels=64, layers=5, rbf=50, STF mode=stf2, routing=none, adapt rt=F, cross=T, Hodge F=T, GP rdo=F, self-int=F \\
STF (full, no routing) & 6 & channels=60, layers=5, rbf=50, STF mode=stf2+stf3, routing=none, adapt rt=F, cross=T, Hodge F=T, GP rdo=F, self-int=F \\
STF (output only) & 6 & channels=56, layers=5, rbf=50, STF mode=stf2+stf3, routing=none, adapt rt=F, cross=T, Hodge F=F, GP rdo=F, self-int=F \\
Clifford (vanilla) & 6 & channels=80, layers=5, rbf=50, hidden out=80, max grade=1, $\ell{=}2$=F \\
\bottomrule
\end{tabular}%
}
\end{table}

\subsection{rMD17 per-molecule force and energy MAE}\label{supp:md17_permol}

Table~\ref{tab:md17_permol_full} reports per-molecule force MAE (meV/\AA) and energy MAE (meV) for all models in the main-text rMD17 evaluation, in native units and averaged across seeds with standard deviations, complementing the force cosine similarity table in the main paper (Table~\ref{tab:md17_fcos}). The Clifford-to-CliffordSTF force cosine improvement observed in Table~\ref{tab:md17_fcos} occurs at comparable force and energy magnitude accuracy, which is the content of the "matched magnitude" claim in the main text. Reading panel (a), CliffordSTF's $17.6$\,meV/\AA{} aggregate force MAE sits between the converged $L\!\geq\!2$ tier (EquiformerV2 at $3.9$--$4.2$\,meV/\AA{} across $L_{\max}\!\in\!\{2,3,4\}$) and the base Clifford row at $20.8$\,meV/\AA{}, and improves on the base Clifford on every individual molecule. Panel (b) shows the energy-MAE picture parallels the force result: CliffordSTF and the base Clifford track each other within seed variance ($30.4$ vs.\ $34.1$\,meV aggregate), confirming that the STF extension does not trade scalar accuracy for directional accuracy.

\begin{table}[h]
\centering
\caption{\textbf{rMD17 per-molecule force and energy MAE} (mean $\pm$ std across 3 seeds, native units: meV/\AA{} for force MAE, meV for energy MAE). Values $\geq 1{,}000$ are rendered in k-notation (e.g., \texttt{12.8k $\pm$ 2.0k} $\equiv$ \texttt{12{,}800 $\pm$ 2{,}000}) to keep converged-model rows visually distinct from non-converged baselines. Top block: $L\!\geq\!2$ direct-force models. Middle block: $L\!\leq\!1$ direct-force models. Bottom block (gray): strictly invariant models (SchNet, DimeNet++) whose forces come from energy gradients. $\ast$ FAENet predicts forces directly but uses stochastic frame-averaged equivariance; GotenNet is $L\!\geq\!2$ ($\mathtt{lmax}{=}2$) but did not converge under our fixed-budget protocol---both are listed in their natural $L$-cohort for completeness and excluded from the per-architecture directional comparison in the main text. Under the fixed $10^6$-parameter budget, MACE and ICTP at every $L_{\max}$ attain force MAE in the 3{,}000--6{,}000\,meV/\AA{} range, reflecting non-convergence at this capacity rather than architecture quality at recommended hyperparameters. $\dagger$ EquiformerV2 predicts absolute (non-referenced) molecular energies; its energy MAE in panel (b) reflects absolute binding-energy magnitudes ($\sim$247\,eV consistent across all three $L_{\max}$ settings), not prediction error, and is not comparable to the energy-referenced baselines.}
\label{tab:md17_permol_full}
\setlength{\tabcolsep}{3pt}
\scriptsize
{(a) Force MAE (meV/\AA).}\\[2pt]
\resizebox{\textwidth}{!}{%
\begin{tabular}{lccccccccccc}
\toprule
Model & Asp & Azo & Bnz & Eth & Mal & Nap & Par & Sal & Tol & Ura & Agg \\
\midrule
EquiformerV2 $L{=}4$ & 6.172 $\pm$ 0.239 & 4.451 $\pm$ 0.274 & 1.805 $\pm$ 0.072 & 2.137 $\pm$ 0.022 & 3.193 $\pm$ 0.135 & 3.066 $\pm$ 0.058 & 5.653 $\pm$ 0.157 & 4.869 $\pm$ 0.755 & 3.298 $\pm$ 0.310 & 4.649 $\pm$ 0.259 & 3.929 $\pm$ 0.104 \\
EquiformerV2 $L{=}3$ & 6.290 $\pm$ 0.283 & 4.237 $\pm$ 0.131 & 1.917 $\pm$ 0.214 & 2.605 $\pm$ 0.220 & 4.246 $\pm$ 0.146 & 2.973 $\pm$ 0.066 & 5.308 $\pm$ 0.221 & 4.872 $\pm$ 0.220 & 3.316 $\pm$ 0.202 & 4.369 $\pm$ 0.048 & 4.013 $\pm$ 0.081 \\
EquiformerV2 $L{=}2$ & 6.281 $\pm$ 0.239 & 4.563 $\pm$ 0.329 & 1.830 $\pm$ 0.077 & 3.775 $\pm$ 0.299 & 4.569 $\pm$ 0.549 & 2.945 $\pm$ 0.094 & 5.292 $\pm$ 0.322 & 5.117 $\pm$ 0.178 & 3.611 $\pm$ 0.241 & 4.483 $\pm$ 0.180 & 4.247 $\pm$ 0.108 \\
CliffordSTF (ours) & 21.419 $\pm$ 0.453 & 17.944 $\pm$ 2.527 & 9.692 $\pm$ 0.675 & 15.302 $\pm$ 4.451 & 17.696 $\pm$ 3.218 & 16.657 $\pm$ 2.360 & 20.071 $\pm$ 1.390 & 19.820 $\pm$ 1.137 & 16.113 $\pm$ 2.414 & 20.930 $\pm$ 0.907 & 17.564 $\pm$ 0.671 \\
MACE $L{=}3$ & 6.1k $\pm$ 1.0k & 4.9k $\pm$ 1.1k & 1.0k $\pm$ 0.2k & 1.4k $\pm$ 0.2k & 2.6k $\pm$ 0.6k & 3.1k $\pm$ 0.4k & 4.9k $\pm$ 1.2k & 6.0k $\pm$ 1.8k & 2.6k $\pm$ 0.3k & 4.9k $\pm$ 0.2k & 3.7k $\pm$ 0.4k \\
MACE $L{=}2$ & 8.2k $\pm$ 0.9k & 5.3k $\pm$ 1.0k & 1.3k $\pm$ 0.2k & 2.1k $\pm$ 0.5k & 2.9k $\pm$ 0.1k & 5.2k $\pm$ 1.1k & 7.7k $\pm$ 1.3k & 6.4k $\pm$ 2.2k & 4.4k $\pm$ 0.9k & 4.9k $\pm$ 0.5k & 4.8k $\pm$ 0.2k \\
ICTP $L{=}3$ & 12.8k $\pm$ 2.0k & 6.0k $\pm$ 2.1k & 1.7k $\pm$ 0.8k & 1.9k $\pm$ 0.1k & 3.9k $\pm$ 0.6k & 5.6k $\pm$ 0.8k & 7.7k $\pm$ 0.9k & 7.8k $\pm$ 2.5k & 3.4k $\pm$ 0.3k & 3.7k $\pm$ 0.7k & 5.4k $\pm$ 0.6k \\
ICTP $L{=}2$ & 7.5k $\pm$ 1.5k & 6.9k $\pm$ 1.0k & 2.3k $\pm$ 0.2k & 1.9k $\pm$ 0.2k & 2.3k $\pm$ 1.1k & 5.6k $\pm$ 1.4k & 7.7k $\pm$ 2.8k & 8.2k $\pm$ 1.9k & 3.5k $\pm$ 1.0k & 7.1k $\pm$ 0.6k & 5.3k $\pm$ 0.2k \\
GotenNet$^{\ast}$ & 28.1k $\pm$ 24.0k & 4.3k $\pm$ 3.2k & 14.0k $\pm$ 22.1k & 2.2k $\pm$ 2.7k & 7.5k $\pm$ 12.3k & 3.2k $\pm$ 1.6k & 8.3k $\pm$ 10.0k & 3.9k $\pm$ 3.7k & 2.5k $\pm$ 3.3k & 14.3k $\pm$ 14.8k & 8.8k $\pm$ 2.7k \\
\midrule
Clifford (ours base) & 21.855 $\pm$ 0.002 & 22.262 $\pm$ 0.010 & 15.212 $\pm$ 0.009 & 19.996 $\pm$ 0.001 & 21.922 $\pm$ 0.004 & 21.345 $\pm$ 0.024 & 21.313 $\pm$ 0.004 & 21.470 $\pm$ 0.027 & 21.029 $\pm$ 0.004 & 22.074 $\pm$ 0.037 & 20.848 $\pm$ 0.003 \\
MACE $L{=}1$ & 10.7k $\pm$ 1.3k & 5.8k $\pm$ 0.7k & 2.6k $\pm$ 0.4k & 2.3k $\pm$ 0.3k & 4.5k $\pm$ 0.4k & 6.7k $\pm$ 0.3k & 6.4k $\pm$ 1.0k & 8.3k $\pm$ 1.4k & 5.5k $\pm$ 2.3k & 6.3k $\pm$ 1.2k & 5.9k $\pm$ 0.4k \\
ICTP $L{=}1$ & 10.4k $\pm$ 4.3k & 5.7k $\pm$ 0.7k & 2.2k $\pm$ 0.5k & 2.0k $\pm$ 0.4k & 4.0k $\pm$ 1.1k & 4.8k $\pm$ 1.0k & 8.4k $\pm$ 1.3k & 7.5k $\pm$ 1.6k & 3.1k $\pm$ 0.2k & 4.5k $\pm$ 3.2k & 5.3k $\pm$ 0.6k \\
PaiNN & 1.4k $\pm$ 0.2k & 1.1k $\pm$ 0.0k & 387.093 $\pm$ 20.028 & 392.650 $\pm$ 50.219 & 662.121 $\pm$ 190.588 & 804.449 $\pm$ 239.850 & 970.710 $\pm$ 382.994 & 1.0k $\pm$ 0.2k & 784.899 $\pm$ 178.245 & 1.1k $\pm$ 0.3k & 863.323 $\pm$ 7.657 \\
TorchMD-NET & 640.948 $\pm$ 78.214 & 347.636 $\pm$ 20.053 & 154.117 $\pm$ 7.309 & 154.106 $\pm$ 1.947 & 299.581 $\pm$ 97.919 & 219.310 $\pm$ 38.192 & 486.259 $\pm$ 81.638 & 386.224 $\pm$ 85.780 & 234.554 $\pm$ 31.349 & 278.833 $\pm$ 25.750 & 320.157 $\pm$ 8.706 \\
ViSNet & 1.7k $\pm$ 0.7k & 1.7k $\pm$ 0.7k & 539.034 $\pm$ 77.909 & 425.601 $\pm$ 309.210 & 825.760 $\pm$ 267.674 & 832.077 $\pm$ 276.489 & 1.0k $\pm$ 0.1k & 1.3k $\pm$ 0.1k & 573.367 $\pm$ 152.801 & 1.1k $\pm$ 0.2k & 996.860 $\pm$ 43.575 \\
NequIP $L{=}1$ & 6.1k $\pm$ 0.0k & 2.7k $\pm$ 0.1k & 1.2k $\pm$ 0.0k & 1.2k $\pm$ 0.0k & 2.4k $\pm$ 0.0k & 1.8k $\pm$ 0.0k & 4.4k $\pm$ 0.0k & 2.9k $\pm$ 0.0k & 1.1k $\pm$ 0.0k & 2.5k $\pm$ 0.0k & 2.6k $\pm$ 0.0k \\
FAENet$^{\ast}$ & 21.886 $\pm$ 0.030 & 22.288 $\pm$ 0.053 & 11.807 $\pm$ 2.931 & 19.980 $\pm$ 0.008 & 21.814 $\pm$ 0.179 & 21.548 $\pm$ 0.867 & 26.748 $\pm$ 9.464 & 21.679 $\pm$ 0.098 & 20.840 $\pm$ 0.301 & 21.553 $\pm$ 1.018 & 21.014 $\pm$ 0.749 \\
\midrule
\color{gray}{DimeNet++} & \color{gray}{1.2k $\pm$ 0.1k} & \color{gray}{991.164 $\pm$ 223.096} & \color{gray}{278.098 $\pm$ 49.866} & \color{gray}{389.387 $\pm$ 103.601} & \color{gray}{726.850 $\pm$ 80.483} & \color{gray}{608.244 $\pm$ 3.732} & \color{gray}{1.1k $\pm$ 0.1k} & \color{gray}{988.503 $\pm$ 107.070} & \color{gray}{519.458 $\pm$ 69.633} & \color{gray}{1.0k $\pm$ 0.1k} & \color{gray}{776.368 $\pm$ 11.972} \\
\color{gray}{SchNet} & \color{gray}{1.2k $\pm$ 0.1k} & \color{gray}{918.727 $\pm$ 33.117} & \color{gray}{196.534 $\pm$ 20.019} & \color{gray}{97.289 $\pm$ 13.538} & \color{gray}{226.752 $\pm$ 38.836} & \color{gray}{700.782 $\pm$ 58.526} & \color{gray}{1.4k $\pm$ 0.1k} & \color{gray}{1.2k $\pm$ 0.0k} & \color{gray}{627.706 $\pm$ 87.829} & \color{gray}{664.972 $\pm$ 35.163} & \color{gray}{720.170 $\pm$ 4.806} \\
\bottomrule
\end{tabular}%
}
 
\vspace{8pt}
\scriptsize
{(b) Energy MAE (meV).}\\[2pt]
\resizebox{\textwidth}{!}{%
\begin{tabular}{lccccccccccc}
\toprule
Model & Asp & Azo & Bnz & Eth & Mal & Nap & Par & Sal & Tol & Ura & Agg \\
\midrule
EquiformerV2 $L{=}4$$^{\dagger}$ & 405.9k $\pm$ 0.0k & 358.3k $\pm$ 0.0k & 145.3k $\pm$ 0.0k & 96.9k $\pm$ 0.0k & 167.2k $\pm$ 0.0k & 241.4k $\pm$ 0.0k & 322.5k $\pm$ 0.1k & 310.4k $\pm$ 0.0k & 169.8k $\pm$ 0.0k & 259.6k $\pm$ 0.0k & 247.7k $\pm$ 0.0k \\
EquiformerV2 $L{=}3$$^{\dagger}$ & 405.8k $\pm$ 0.0k & 358.2k $\pm$ 0.1k & 145.1k $\pm$ 0.0k & 96.9k $\pm$ 0.0k & 167.1k $\pm$ 0.0k & 241.2k $\pm$ 0.0k & 322.4k $\pm$ 0.0k & 310.3k $\pm$ 0.1k & 169.7k $\pm$ 0.1k & 259.5k $\pm$ 0.0k & 247.6k $\pm$ 0.0k \\
EquiformerV2 $L{=}2$$^{\dagger}$ & 405.5k $\pm$ 0.1k & 357.9k $\pm$ 0.0k & 145.0k $\pm$ 0.0k & 96.8k $\pm$ 0.0k & 167.0k $\pm$ 0.0k & 241.0k $\pm$ 0.0k & 322.1k $\pm$ 0.0k & 310.1k $\pm$ 0.1k & 169.5k $\pm$ 0.0k & 259.4k $\pm$ 0.0k & 247.4k $\pm$ 0.0k \\
CliffordSTF (ours) & 28.054 $\pm$ 9.040 & 46.565 $\pm$ 20.081 & 17.160 $\pm$ 2.661 & 6.677 $\pm$ 1.432 & 25.323 $\pm$ 5.299 & 22.087 $\pm$ 14.892 & 60.367 $\pm$ 30.139 & 25.819 $\pm$ 18.136 & 39.777 $\pm$ 20.171 & 31.876 $\pm$ 3.656 & 30.370 $\pm$ 6.565 \\
MACE $L{=}3$ & 2.6k $\pm$ 0.3k & 1.5k $\pm$ 0.5k & 188.788 $\pm$ 42.079 & 393.426 $\pm$ 84.371 & 775.686 $\pm$ 134.804 & 930.046 $\pm$ 54.572 & 2.4k $\pm$ 1.1k & 1.9k $\pm$ 0.9k & 957.564 $\pm$ 141.704 & 1.4k $\pm$ 0.1k & 1.3k $\pm$ 0.2k \\
MACE $L{=}2$ & 4.2k $\pm$ 1.2k & 1.9k $\pm$ 0.3k & 411.629 $\pm$ 325.937 & 653.666 $\pm$ 354.932 & 833.440 $\pm$ 22.310 & 1.9k $\pm$ 0.4k & 3.6k $\pm$ 0.8k & 2.3k $\pm$ 1.0k & 1.8k $\pm$ 0.3k & 1.5k $\pm$ 0.5k & 1.9k $\pm$ 0.3k \\
ICTP $L{=}3$ & 5.9k $\pm$ 0.3k & 2.6k $\pm$ 0.6k & 365.226 $\pm$ 27.122 & 515.550 $\pm$ 60.480 & 1.1k $\pm$ 0.3k & 1.9k $\pm$ 0.4k & 3.3k $\pm$ 0.6k & 2.7k $\pm$ 0.6k & 1.2k $\pm$ 0.0k & 1.1k $\pm$ 0.3k & 2.1k $\pm$ 0.2k \\
ICTP $L{=}2$ & 3.2k $\pm$ 0.5k & 3.1k $\pm$ 0.7k & 356.589 $\pm$ 14.078 & 516.129 $\pm$ 179.758 & 662.887 $\pm$ 233.720 & 2.0k $\pm$ 0.8k & 4.2k $\pm$ 2.0k & 3.1k $\pm$ 0.9k & 1.4k $\pm$ 0.8k & 2.2k $\pm$ 0.3k & 2.1k $\pm$ 0.1k \\
GotenNet$^{\ast}$ & 21.2k $\pm$ 17.1k & 3.8k $\pm$ 4.3k & 20.0k $\pm$ 33.9k & 1.1k $\pm$ 1.6k & 104.7k $\pm$ 79.1k & 2.4k $\pm$ 1.2k & 54.1k $\pm$ 90.2k & 14.5k $\pm$ 23.8k & 1.1k $\pm$ 1.3k & 37.4k $\pm$ 49.6k & 26.0k $\pm$ 7.2k \\
\midrule
Clifford (ours base) & 66.537 $\pm$ 31.283 & 56.820 $\pm$ 44.698 & 24.707 $\pm$ 17.343 & 14.675 $\pm$ 6.332 & 37.370 $\pm$ 25.576 & 14.780 $\pm$ 5.935 & 23.000 $\pm$ 7.962 & 35.300 $\pm$ 29.797 & 39.009 $\pm$ 31.720 & 29.097 $\pm$ 17.892 & 34.130 $\pm$ 2.563 \\
MACE $L{=}1$ & 4.9k $\pm$ 0.6k & 2.2k $\pm$ 0.6k & 446.266 $\pm$ 78.046 & 748.730 $\pm$ 294.846 & 1.4k $\pm$ 0.3k & 2.5k $\pm$ 0.1k & 2.9k $\pm$ 0.6k & 3.2k $\pm$ 0.1k & 2.4k $\pm$ 1.1k & 1.9k $\pm$ 0.3k & 2.3k $\pm$ 0.1k \\
ICTP $L{=}1$ & 5.1k $\pm$ 2.6k & 2.2k $\pm$ 0.5k & 709.468 $\pm$ 201.768 & 646.268 $\pm$ 219.460 & 1.2k $\pm$ 0.4k & 1.9k $\pm$ 0.2k & 4.7k $\pm$ 0.8k & 3.0k $\pm$ 0.9k & 1.5k $\pm$ 0.1k & 1.4k $\pm$ 0.7k & 2.2k $\pm$ 0.4k \\
PaiNN & 4.5k $\pm$ 1.3k & 3.1k $\pm$ 3.5k & 654.920 $\pm$ 484.394 & 572.296 $\pm$ 497.010 & 1.4k $\pm$ 0.8k & 1.4k $\pm$ 1.1k & 2.6k $\pm$ 3.3k & 697.934 $\pm$ 132.654 & 1.4k $\pm$ 0.7k & 2.0k $\pm$ 1.5k & 1.8k $\pm$ 0.5k \\
TorchMD-NET & 1.4k $\pm$ 0.7k & 496.497 $\pm$ 269.580 & 180.477 $\pm$ 170.513 & 187.650 $\pm$ 148.968 & 486.742 $\pm$ 231.389 & 451.681 $\pm$ 462.015 & 1.6k $\pm$ 0.5k & 624.017 $\pm$ 754.032 & 263.880 $\pm$ 284.126 & 940.992 $\pm$ 872.265 & 661.312 $\pm$ 130.339 \\
ViSNet & 2.7k $\pm$ 1.5k & 6.5k $\pm$ 2.7k & 413.191 $\pm$ 358.609 & 1.5k $\pm$ 1.6k & 348.909 $\pm$ 95.459 & 2.2k $\pm$ 1.0k & 1.7k $\pm$ 0.3k & 4.1k $\pm$ 2.9k & 922.044 $\pm$ 640.715 & 2.9k $\pm$ 0.8k & 2.3k $\pm$ 0.2k \\
NequIP $L{=}1$ & 3.7k $\pm$ 0.1k & 1.0k $\pm$ 0.1k & 197.342 $\pm$ 39.519 & 453.727 $\pm$ 39.250 & 916.506 $\pm$ 14.226 & 542.585 $\pm$ 12.952 & 3.1k $\pm$ 0.1k & 1.2k $\pm$ 0.1k & 411.931 $\pm$ 64.980 & 755.870 $\pm$ 109.032 & 1.2k $\pm$ 0.0k \\
FAENet$^{\ast}$ & 33.876 $\pm$ 15.568 & 26.864 $\pm$ 26.312 & 4.012 $\pm$ 1.777 & 3.480 $\pm$ 0.059 & 3.351 $\pm$ 0.161 & 47.213 $\pm$ 38.921 & 87.265 $\pm$ 123.215 & 32.828 $\pm$ 30.261 & 8.166 $\pm$ 3.405 & 5.399 $\pm$ 2.175 & 25.245 $\pm$ 12.952 \\
\midrule
\color{gray}{DimeNet++} & \color{gray}{1.8k $\pm$ 1.1k} & \color{gray}{3.7k $\pm$ 2.4k} & \color{gray}{216.364 $\pm$ 250.668} & \color{gray}{611.701 $\pm$ 618.476} & \color{gray}{1.0k $\pm$ 0.3k} & \color{gray}{956.485 $\pm$ 380.148} & \color{gray}{1.1k $\pm$ 0.7k} & \color{gray}{1.6k $\pm$ 1.4k} & \color{gray}{724.374 $\pm$ 424.645} & \color{gray}{1.2k $\pm$ 1.4k} & \color{gray}{1.3k $\pm$ 0.4k} \\
\color{gray}{SchNet} & \color{gray}{1.2k $\pm$ 0.8k} & \color{gray}{1.3k $\pm$ 1.0k} & \color{gray}{604.539 $\pm$ 447.764} & \color{gray}{403.595 $\pm$ 239.889} & \color{gray}{588.952 $\pm$ 533.312} & \color{gray}{1.7k $\pm$ 0.7k} & \color{gray}{1.2k $\pm$ 0.4k} & \color{gray}{2.1k $\pm$ 0.9k} & \color{gray}{441.554 $\pm$ 134.053} & \color{gray}{981.058 $\pm$ 333.775} & \color{gray}{1.1k $\pm$ 0.1k} \\
\bottomrule
\end{tabular}%
}
\end{table}

\subsection{OC20 and OC22 full results}\label{supp:oc_full}

This subsection reports the full OC20 and OC22 results (Table~\ref{tab:oc_full}), mean $\pm$ std across seeds in native units. Panel~(a) gives OC20 at the 200k scale: IS2RE MAE for all four splits ($\mathrm{val}_{\text{id}}$, $\mathrm{val}_{\text{ood-ads}}$, $\mathrm{val}_{\text{ood-cat}}$, $\mathrm{val}_{\text{ood-both}}$) together with S2EF ($e_\mathrm{MAE}$, $f_\mathrm{MAE}$, $f_\mathrm{cos}$) for $\mathrm{val}_{\text{id}}$ and $\mathrm{val}_{\text{ood-both}}$; the two remaining OOD splits (ood-ads, ood-cat) behave similarly to ood-both on S2EF and are omitted for compactness. Panel~(b) reproduces the OC22 results from the main text (Table~\ref{tab:oc22}) alongside OC20 for side-by-side reading. Reading across the two panels, CliffordSTF and the base Clifford row attain the strongest in-distribution S2EF energy MAE on both benchmarks (panel (a) OC20: $2.115$/$1.889$\,eV vs.\ $2.454$\,eV for EquiformerV2; panel (b) OC22: $2.866$/$2.844$\,eV vs.\ $3.369$\,eV); CliffordSTF additionally is the best $L\!\geq\!2$ architecture on OC22 IS2RE in-distribution ($4.089$\,eV vs.\ next-best $L\!\geq\!2$ GotenNet at $4.318$\,eV and EquiformerV2 at $L_{\max}\!=\!4$ at $4.905$\,eV) and on OC22 IS2RE OOD EwT ($0.312$ vs $0.234$ for EquiformerV2). EquiformerV2 retains the strongest S2EF force cosine similarity on OC20 in our protocol; on OC22 the two are within seed variance.

\begin{table}[h]
\centering
\caption{\textbf{OC20 and OC22 full results} (mean $\pm$ std across 3 seeds; native units: eV for IS2RE MAE and S2EF energy MAE, eV/\AA{} for S2EF force MAE, dimensionless for force cosine similarity). (a) OC20 at the 200k scale, with IS2RE across all four splits and S2EF for $\mathrm{val}_{\text{id}}$ and $\mathrm{val}_{\text{ood-both}}$. (b) OC22 at full scale, with IS2RE MAE for $\mathrm{val}_{\text{id}}$ and $\mathrm{val}_{\text{ood}}$, IS2RE EwT for $\mathrm{val}_{\text{ood}}$, and S2EF for $\mathrm{val}_{\text{id}}$ and $\mathrm{val}_{\text{ood}}$. Our models are in the top two rows of each panel. MACE and ICTP at $L_{\max}\!\in\!\{1,2\}$ attain energy MAE values 1--3$\times$ those of converged baselines under the fixed $10^6$-parameter budget, mirroring the non-convergence pattern documented on rMD17 (Table~\ref{tab:md17_permol_full}).}
\label{tab:oc_full}
\setlength{\tabcolsep}{3pt}
\scriptsize
{(a) OC20}\\[2pt]
\resizebox{\textwidth}{!}{%
\begin{tabular}{lcccccccccc}
\toprule
& \multicolumn{4}{c}{IS2RE MAE (eV)} & \multicolumn{3}{c}{S2EF $\mathrm{val}_{\text{id}}$} & \multicolumn{3}{c}{S2EF $\mathrm{val}_{\text{ood-both}}$} \\
\cmidrule(lr){2-5}\cmidrule(lr){6-8}\cmidrule(lr){9-11}
Model & $\mathrm{val}_{\text{id}}$ & $\mathrm{val}_{\text{ood-ads}}$ & $\mathrm{val}_{\text{ood-cat}}$ & $\mathrm{val}_{\text{ood-both}}$ & $e_\mathrm{MAE}$ (eV) & $f_\mathrm{MAE}$ (eV/\AA) & $f_\mathrm{cos}\uparrow$ & $e_\mathrm{MAE}$ (eV) & $f_\mathrm{MAE}$ (eV/\AA) & $f_\mathrm{cos}\uparrow$ \\
\midrule
CliffordSTF (ours) & 0.703 $\pm$ 0.008 & 0.814 $\pm$ 0.024 & 0.707 $\pm$ 0.003 & 0.753 $\pm$ 0.029 & 2.115 $\pm$ 0.059 & 0.060 $\pm$ 0.000 & 0.138 $\pm$ 0.003 & 3.486 $\pm$ 0.160 & 0.077 $\pm$ 0.002 & 0.125 $\pm$ 0.006 \\
Clifford (ours base) & 0.765 $\pm$ 0.025 & 0.907 $\pm$ 0.031 & 0.761 $\pm$ 0.022 & 0.849 $\pm$ 0.025 & 1.889 $\pm$ 0.014 & 0.064 $\pm$ 0.000 & 0.115 $\pm$ 0.003 & 3.510 $\pm$ 0.004 & 0.081 $\pm$ 0.000 & 0.095 $\pm$ 0.004 \\
EquiformerV2 & 0.801 $\pm$ 0.152 & 0.845 $\pm$ 0.081 & 0.776 $\pm$ 0.149 & 0.785 $\pm$ 0.078 & 2.454 $\pm$ 0.013 & 0.060 $\pm$ 0.000 & 0.170 $\pm$ 0.002 & 3.602 $\pm$ 0.126 & 0.074 $\pm$ 0.000 & 0.162 $\pm$ 0.005 \\
GotenNet & 0.681 $\pm$ 0.006 & 0.749 $\pm$ 0.018 & 0.661 $\pm$ 0.002 & 0.682 $\pm$ 0.011 & 3.650 $\pm$ 0.276 & 0.064 $\pm$ 0.003 & 0.167 $\pm$ 0.006 & 4.850 $\pm$ 0.165 & 0.082 $\pm$ 0.001 & 0.144 $\pm$ 0.007 \\
TorchMD-NET & 0.726 $\pm$ 0.022 & 0.826 $\pm$ 0.046 & 0.714 $\pm$ 0.030 & 0.762 $\pm$ 0.039 & 3.715 $\pm$ 0.006 & 0.064 $\pm$ 0.000 & 0.162 $\pm$ 0.003 & 4.931 $\pm$ 0.010 & 0.081 $\pm$ 0.000 & 0.140 $\pm$ 0.002 \\
PaiNN & 0.720 $\pm$ 0.005 & 0.900 $\pm$ 0.088 & 0.706 $\pm$ 0.008 & 0.821 $\pm$ 0.070 & 4.198 $\pm$ 0.002 & 0.066 $\pm$ 0.000 & 0.154 $\pm$ 0.000 & 5.339 $\pm$ 0.009 & 0.083 $\pm$ 0.000 & 0.132 $\pm$ 0.001 \\
DimeNet++ & 0.699 $\pm$ 0.013 & 0.843 $\pm$ 0.024 & 0.691 $\pm$ 0.014 & 0.771 $\pm$ 0.031 & 4.098 $\pm$ 0.023 & 0.063 $\pm$ 0.000 & 0.170 $\pm$ 0.000 & 5.373 $\pm$ 0.053 & 0.080 $\pm$ 0.001 & 0.153 $\pm$ 0.001 \\
SchNet & 0.766 $\pm$ 0.007 & 0.834 $\pm$ 0.047 & 0.755 $\pm$ 0.001 & 0.779 $\pm$ 0.038 & 4.790 $\pm$ 0.067 & 0.070 $\pm$ 0.001 & 0.128 $\pm$ 0.006 & 5.879 $\pm$ 0.079 & 0.085 $\pm$ 0.001 & 0.115 $\pm$ 0.004 \\
ICTP $L{=}2$ & 0.881 $\pm$ 0.028 & 1.108 $\pm$ 0.027 & 0.830 $\pm$ 0.012 & 0.972 $\pm$ 0.001 & 6.434 $\pm$ 0.146 & 0.077 $\pm$ 0.001 & 0.100 $\pm$ 0.006 & 7.608 $\pm$ 0.235 & 0.092 $\pm$ 0.001 & 0.090 $\pm$ 0.005 \\
ICTP $L{=}1$ & 0.881 $\pm$ 0.020 & 0.996 $\pm$ 0.015 & 0.842 $\pm$ 0.002 & 0.904 $\pm$ 0.033 & 6.032 $\pm$ 0.149 & 0.079 $\pm$ 0.003 & 0.098 $\pm$ 0.002 & 7.212 $\pm$ 0.127 & 0.094 $\pm$ 0.003 & 0.089 $\pm$ 0.004 \\
MACE $L{=}2$ & 0.936 $\pm$ 0.134 & 0.977 $\pm$ 0.174 & 0.900 $\pm$ 0.126 & 0.881 $\pm$ 0.132 & 6.268 $\pm$ 0.002 & 0.075 $\pm$ 0.001 & 0.113 $\pm$ 0.002 & 7.367 $\pm$ 0.020 & 0.091 $\pm$ 0.001 & 0.102 $\pm$ 0.001 \\
MACE $L{=}1$ & 0.938 $\pm$ 0.153 & 1.052 $\pm$ 0.016 & 0.910 $\pm$ 0.130 & 0.974 $\pm$ 0.015 & 5.708 $\pm$ 0.042 & 0.077 $\pm$ 0.001 & 0.101 $\pm$ 0.002 & 6.979 $\pm$ 0.108 & 0.092 $\pm$ 0.001 & 0.092 $\pm$ 0.002 \\
NequIP & 0.931 $\pm$ 0.048 & 0.971 $\pm$ 0.083 & 0.899 $\pm$ 0.053 & 0.893 $\pm$ 0.076 & 5.802 $\pm$ 0.065 & 0.077 $\pm$ 0.000 & 0.110 $\pm$ 0.001 & 7.112 $\pm$ 0.009 & 0.091 $\pm$ 0.000 & 0.100 $\pm$ 0.000 \\
\bottomrule
\end{tabular}%
}
 
\vspace{8pt}
{(b) OC22}\\[2pt]
\resizebox{\textwidth}{!}{%
\begin{tabular}{lccccccccc}
\toprule
& \multicolumn{3}{c}{IS2RE} & \multicolumn{3}{c}{S2EF $\mathrm{val}_{\text{id}}$} & \multicolumn{3}{c}{S2EF $\mathrm{val}_{\text{ood}}$} \\
\cmidrule(lr){2-4}\cmidrule(lr){5-7}\cmidrule(lr){8-10}
Model & $\mathrm{MAE}_{\text{id}}$ (eV) & $\mathrm{MAE}_{\text{ood}}$ (eV) & $\mathrm{EwT}_{\text{ood}}\uparrow$ & $e_\mathrm{MAE}$ (eV) & $f_\mathrm{MAE}$ (eV/\AA) & $f_\mathrm{cos}\uparrow$ & $e_\mathrm{MAE}$ (eV) & $f_\mathrm{MAE}$ (eV/\AA) & $f_\mathrm{cos}\uparrow$ \\
\midrule
CliffordSTF (ours) & 4.089 $\pm$ 0.261 & 6.463 $\pm$ 0.310 & 0.312 $\pm$ 0.136 & 2.866 $\pm$ 0.093 & 0.073 $\pm$ 0.000 & 0.050 $\pm$ 0.004 & 5.083 $\pm$ 0.166 & 0.070 $\pm$ 0.000 & 0.050 $\pm$ 0.003 \\
Clifford (ours base) & 4.422 $\pm$ 0.862 & 6.726 $\pm$ 0.656 & 0.264 $\pm$ 0.055 & 2.844 $\pm$ 0.100 & 0.072 $\pm$ 0.000 & 0.059 $\pm$ 0.002 & 5.375 $\pm$ 0.157 & 0.071 $\pm$ 0.001 & 0.058 $\pm$ 0.002 \\
EquiformerV2 & 4.905 $\pm$ 0.566 & 6.619 $\pm$ 0.445 & 0.234 $\pm$ 0.025 & 3.369 $\pm$ 0.066 & 0.072 $\pm$ 0.000 & 0.067 $\pm$ 0.001 & 5.262 $\pm$ 0.076 & 0.070 $\pm$ 0.000 & 0.062 $\pm$ 0.002 \\
PaiNN & 3.516 $\pm$ 0.292 & 6.712 $\pm$ 0.140 & 0.234 $\pm$ 0.178 & 5.306 $\pm$ 0.030 & 0.076 $\pm$ 0.000 & 0.048 $\pm$ 0.001 & 7.452 $\pm$ 0.254 & 0.078 $\pm$ 0.000 & 0.036 $\pm$ 0.001 \\
GotenNet & 4.318 $\pm$ 0.300 & 5.772 $\pm$ 0.532 & 0.144 $\pm$ 0.000 & 4.666 $\pm$ 0.310 & 0.073 $\pm$ 0.000 & 0.052 $\pm$ 0.001 & 7.200 $\pm$ 0.121 & 0.072 $\pm$ 0.001 & 0.044 $\pm$ 0.001 \\
TorchMD-NET & 5.397 $\pm$ 0.000 & 8.312 $\pm$ 0.000 & 0.396 $\pm$ 0.000 & 4.278 $\pm$ 0.000 & 0.075 $\pm$ 0.000 & 0.051 $\pm$ 0.000 & 6.232 $\pm$ 0.000 & 0.074 $\pm$ 0.000 & 0.043 $\pm$ 0.000 \\
DimeNet++ & 6.320 $\pm$ 4.141 & 9.897 $\pm$ 3.002 & 0.108 $\pm$ 0.102 & 5.288 $\pm$ 0.030 & 0.076 $\pm$ 0.000 & 0.054 $\pm$ 0.002 & 8.717 $\pm$ 0.075 & 0.077 $\pm$ 0.001 & 0.038 $\pm$ 0.002 \\
SchNet & 6.402 $\pm$ 1.688 & 9.424 $\pm$ 0.938 & 0.270 $\pm$ 0.076 & 7.690 $\pm$ 1.508 & 0.077 $\pm$ 0.002 & 0.032 $\pm$ 0.007 & 12.016 $\pm$ 1.938 & 0.078 $\pm$ 0.002 & 0.024 $\pm$ 0.003 \\
MACE $L{=}2$ & 8.416 $\pm$ 0.702 & 22.569 $\pm$ 7.674 & 0.144 $\pm$ 0.051 & 10.017 $\pm$ 0.395 & 0.079 $\pm$ 0.000 & 0.022 $\pm$ 0.004 & 15.415 $\pm$ 0.808 & 0.076 $\pm$ 0.000 & 0.022 $\pm$ 0.005 \\
ICTP $L{=}2$ & 8.641 $\pm$ 1.744 & 17.756 $\pm$ 4.145 & 0.108 $\pm$ 0.051 & 12.447 $\pm$ 0.354 & 0.081 $\pm$ 0.001 & 0.015 $\pm$ 0.002 & 19.218 $\pm$ 0.232 & 0.079 $\pm$ 0.001 & 0.017 $\pm$ 0.002 \\
ICTP $L{=}1$ & 10.963 $\pm$ 1.612 & 20.810 $\pm$ 0.868 & 0.198 $\pm$ 0.076 & 11.483 $\pm$ 0.881 & 0.080 $\pm$ 0.001 & 0.019 $\pm$ 0.001 & 17.263 $\pm$ 0.655 & 0.186 $\pm$ 0.154 & 0.019 $\pm$ 0.000 \\
MACE $L{=}1$ & 12.339 $\pm$ 4.540 & 15.800 $\pm$ 2.430 & 0.180 $\pm$ 0.102 & 10.011 $\pm$ 0.018 & 0.078 $\pm$ 0.000 & 0.024 $\pm$ 0.000 & 15.737 $\pm$ 0.438 & 0.076 $\pm$ 0.000 & 0.023 $\pm$ 0.001 \\
NequIP & 12.575 $\pm$ 4.889 & 14.865 $\pm$ 3.676 & 0.036 $\pm$ 0.000 & 9.868 $\pm$ 0.148 & 0.079 $\pm$ 0.000 & 0.028 $\pm$ 0.000 & 15.809 $\pm$ 0.071 & 0.076 $\pm$ 0.000 & 0.027 $\pm$ 0.001 \\
\bottomrule
\end{tabular}%
}
\end{table}
 
\subsection{Scaling Check Detailed}\label{supp:scaling_check}

This subsection details the $10^7$-parameter retraining of CliffordSTF, base Clifford, EquiformerV2 $L{=}2$, and MACE $L{=}2$ on OC22 (IS2RE \& S2EF) and OC20 (S2EF). Architectures are listed in Table~\ref{tab:supp_settings_oc} (bottom block). The comparison is between best 10M configurations of each architectural family, not strict capacity sweeps. 

\begin{table}[h]
\centering
\caption{\textbf{$10^7$-parameter scaling check}. The MACE OC22 IS2RE cell additionally carries the per-system $\mathrm{val\_id\_mae}$ in brackets as a cross-scale calibration anchor with Table~\ref{tab:oc_full}.}
\label{tab:scaling_check}
\setlength{\tabcolsep}{3pt}
\scriptsize
\resizebox{\textwidth}{!}{%
\begin{tabular}{l c c c c c c c}
\toprule
& \multicolumn{1}{c}{OC22 IS2RE} & \multicolumn{3}{c}{OC22 S2EF} & \multicolumn{3}{c}{OC20 S2EF} \\
\cmidrule(lr){2-2}\cmidrule(lr){3-5}\cmidrule(lr){6-8}
Model & MAE/at (eV/at) & $e_\mathrm{MAE}$ (eV) & $f_\mathrm{MAE}$ (eV/\AA) & $f_\mathrm{cos}\uparrow$ & $e_\mathrm{MAE}$ (eV) & $f_\mathrm{MAE}$ (eV/\AA) & $f_\mathrm{cos}\uparrow$ \\
\midrule
CliffordSTF (ours)        & \textbf{0.0162} & 2.081  & \textbf{0.0769} & 0.094  & 1.552 & 0.0554 & 0.192 \\
Clifford (ours base)      & 0.1277  & 2.564 & 0.0773 & 0.091  & 1.706 & 0.0556  & 0.205 \\
EquiformerV2 $L{=}2$      & 0.0094 & 1.670  & 0.0717  & 0.148 & 1.244 & 0.0497 & 0.245 \\
MACE $L{=}2$              & 0.0779 & 8.503  & 0.0829 & 0.051 & 4.888 & 0.0694  & 0.145  \\
\bottomrule
\end{tabular}%
}
\end{table}

\textbf{Findings.} CliffordSTF wins OC22 IS2RE per-atom MAE ($0.0162$\,eV/atom vs every other model's worst $\geq 0.1127$\,eV/atom at the same $45$-epoch budget) and comes within $1.3\%$ of EquiformerV2 $L{=}2$ on OC22 S2EF force MAE ($0.0769$ vs worst-of-EquiV2 $0.0758$\,eV/\AA{}). EquiformerV2 $L{=}2$ wins OC22 S2EF energy MAE and force-cosine and OC20 S2EF on every primary metric---including against \emph{our} best seeds at \emph{their} budgets. Base Clifford 10M tracks CliffordSTF closely on OC22 S2EF force MAE ($0.0773$ vs $0.0769$\,eV/\AA{}) despite the architectural reconfiguration, supporting the claim that the family scales without losing its catalysis-task footing. The MACE 10M per-system OC22 IS2RE value ($7.33$\,eV best across seeds) is within $1$\,eV of its $10^6$-parameter row in Table~\ref{tab:oc_full} ($8.4$\,eV), corroborating the existing narrative that MACE under our protocol does not strongly benefit from capacity scaling on this benchmark.

\subsection{QM9 and Molecule3D full results}\label{supp:qm9_full}

Per-target MAE on QM9 and Molecule3D HOMO--LUMO gap, reported in Table~\ref{tab:qm9_mol3d_full} with native units per column and rows ordered by QM9 aggregate. The QM9 aggregate column averages the six per-target MAEs across mixed units, providing a single scalar per model that is comparable within the column; CliffordSTF places fifth of ten models at $2.152$\,units, sitting alongside the vector-equivariant baselines (PaiNN $2.229$, TorchMD-NET $2.232$) and ahead of the next-best $L\!\geq\!2$ row (EquiformerV2 $3.327$). On the Molecule3D HOMO--LUMO gap, FAENet's $0.178$\,eV is an order of magnitude ahead of the rest of the field; the remaining models cluster between $1.107$ and $1.495$\,eV with CliffordSTF at $1.232$\,eV, and we treat Molecule3D as out-of-scope for the directional thesis (\S\ref{sec:limitations}).

\begin{table}[h]
\centering
\caption{\textbf{QM9 per-target MAE and Molecule3D HOMO--LUMO gap MAE} (mean $\pm$ std across seeds; native units). QM9 targets are reported in their native units: heat capacity $C_v$ in cal/(mol$\cdot$K), atomisation energy $U_0$ in eV, polarisability $\alpha$ in Bohr$^3$, HOMO/LUMO in eV, dipole moment $\mu$ in Debye. The QM9 aggregate column is the mean of the six per-target MAEs across mixed units and serves as a single comparable scalar for between-model comparison. Molecule3D reports HOMO--LUMO gap MAE in eV on the large-scale dataset. Rows ordered by QM9 aggregate (lower is better).}
\label{tab:qm9_mol3d_full}
\setlength{\tabcolsep}{3pt}
\resizebox{\textwidth}{!}{%
\begin{tabular}{lcccccccc}
\toprule
& \multicolumn{7}{c}{QM9} & Molecule3D \\
\cmidrule(lr){2-8}\cmidrule(lr){9-9}
Model & $C_v$ & $U_0$ & $\alpha$ & HOMO & LUMO & $\mu$ & Aggregate & HOMO--LUMO (eV) \\
\midrule
GotenNet & 0.041 $\pm$ 0.008 & 8.045 $\pm$ 1.812 & 0.088 $\pm$ 0.007 & 0.033 $\pm$ 0.001 & 0.029 $\pm$ 0.001 & 0.024 $\pm$ 0.003 & 1.377 $\pm$ 0.304 & 1.117 $\pm$ 0.010 \\
FAENet & 0.045 $\pm$ 0.006 & 9.704 $\pm$ 5.215 & 0.114 $\pm$ 0.006 & 0.042 $\pm$ 0.001 & 0.036 $\pm$ 0.003 & 0.061 $\pm$ 0.006 & 1.667 $\pm$ 0.866 & 0.178 $\pm$ 0.002 \\
DimeNet++ & 0.063 $\pm$ 0.014 & 10.843 $\pm$ 2.719 & 0.122 $\pm$ 0.025 & 0.033 $\pm$ 0.001 & 0.028 $\pm$ 0.001 & 0.052 $\pm$ 0.003 & 1.857 $\pm$ 0.449 & 1.258 $\pm$ 0.020 \\
ViSNet & 0.058 $\pm$ 0.007 & 11.742 $\pm$ 6.513 & 0.223 $\pm$ 0.072 & 0.034 $\pm$ 0.001 & 0.031 $\pm$ 0.000 & 0.062 $\pm$ 0.044 & 2.025 $\pm$ 1.087 & -- \\
CliffordSTF (ours) & 0.055 $\pm$ 0.018 & 12.588 $\pm$ 9.200 & 0.143 $\pm$ 0.023 & 0.044 $\pm$ 0.001 & 0.041 $\pm$ 0.002 & 0.041 $\pm$ 0.003 & 2.152 $\pm$ 1.535 & 1.232 $\pm$ 0.005 \\
PaiNN & 0.042 $\pm$ 0.006 & 13.128 $\pm$ 3.356 & 0.110 $\pm$ 0.013 & 0.035 $\pm$ 0.000 & 0.031 $\pm$ 0.001 & 0.027 $\pm$ 0.005 & 2.229 $\pm$ 0.560 & 1.211 $\pm$ 0.003 \\
TorchMD-NET & 0.053 $\pm$ 0.014 & 13.153 $\pm$ 6.127 & 0.101 $\pm$ 0.008 & 0.033 $\pm$ 0.000 & 0.028 $\pm$ 0.001 & 0.024 $\pm$ 0.002 & 2.232 $\pm$ 1.024 & 1.149 $\pm$ 0.004 \\
EquiformerV2 & 0.075 $\pm$ 0.010 & 19.622 $\pm$ 8.347 & 0.171 $\pm$ 0.037 & 0.032 $\pm$ 0.001 & 0.031 $\pm$ 0.003 & 0.031 $\pm$ 0.002 & 3.327 $\pm$ 1.397 & 1.107 $\pm$ 0.049 \\
SchNet & 0.091 $\pm$ 0.018 & 23.424 $\pm$ 4.389 & 0.172 $\pm$ 0.038 & 0.044 $\pm$ 0.001 & 0.045 $\pm$ 0.001 & 0.071 $\pm$ 0.007 & 3.975 $\pm$ 0.732 & 1.410 $\pm$ 0.008 \\
NequIP & 0.171 $\pm$ 0.003 & 35.516 $\pm$ 0.991 & 0.515 $\pm$ 0.034 & 0.115 $\pm$ 0.001 & 0.142 $\pm$ 0.004 & 0.186 $\pm$ 0.003 & 6.108 $\pm$ 0.160 & 1.495 $\pm$ 0.000 \\
\bottomrule
\end{tabular}%
}
\end{table}

\subsection{Wall-clock measurements at matched parameter budget}\label{supp:wallclock}

Table~\ref{tab:wallclock} reports training and inference wall-clock on rMD17 (aspirin, batch 32, single GPU) for every model in the main-text study, measured under a strict timing protocol (10 warmup steps, 20 timed steps, mean $\pm$ std reported on a single GPU; absolute times are not directly comparable to figures reported on other hardware). At inference, the base Clifford model runs at $0.128$\,s/step, ranking fifth across the nineteen-model study and outpacing every $L\!\geq\!2$ or Cartesian-tensor baseline in the table: it is faster than EquiformerV2 $L{=}4$ ($0.142$\,s) by roughly $10\%$, faster than every MACE configuration ($L{=}1$: $0.160$; $L{=}2$: $0.237$; $L{=}3$: $0.261$), faster than NequIP ($0.221$), ViSNet ($0.235$), TorchMD-NET ($0.186$), DimeNet++ ($0.326$), and every ICTP variant ($0.302$--$0.364$). The four models with faster inference (PaiNN, GotenNet, SchNet, FAENet) are scalar-invariant or $L{=}1$ vector architectures with substantially less representational capacity. CliffordSTF inference at $0.352$\,s/step is competitive with the slower spherical-harmonic baselines, faster than ICTP $L{=}2$/$L{=}3$ and within $\sim\!8\%$ of DimeNet++. Since deployed interatomic potentials run inference far more often than they are trained, this is the practitioner-relevant operating regime.

Training cost is also more competitive than previously reported. The base Clifford model trains at $3.11$\,s/step---within $\sim\!25\%$ of EquiformerV2 $L{=}4$ ($2.48$\,s), tied with NequIP ($3.06$\,s), and faster than ICTP at every $L$ ($1.45\times$, $1.73\times$, and $2.00\times$ faster than ICTP $L{=}1/2/3$ respectively), faster than DimeNet++ ($1.88\times$), and faster than MACE $L{=}3$. CliffordSTF training at $7.77$\,s/step is comparable to capacity-scaled $L\!\geq\!2$ baselines, sitting within $25$--$73\%$ of ICTP $L{=}1/2/3$ and DimeNet++ ($1.25\times$ ICTP $L{=}3$, $1.33\times$ DimeNet++, $1.44\times$ ICTP $L{=}2$, $1.73\times$ ICTP $L{=}1$); the $2$--$3\times$ overhead is concentrated against the fastest spherical-harmonic baselines (MACE $L{=}1/2$, EquiformerV2 $L{=}2$--$4$). Both Clifford-family rows share the optimized geometric-product dispatch described in Supplementary~\ref{supp:optim}; the optimization narrowed the CliffordSTF/EquiformerV2 $L{=}4$ training ratio from a previously reported $\sim\!4.3\times$ on different hardware to $\sim\!3.1\times$ here, and the corresponding base Clifford ratio from $1.85\times$ to $1.25\times$.

The forward/backward decomposition pinpoints where the residual CliffordSTF training overhead lives. CliffordSTF spends $2.59$\,s in forward and $5.10$\,s in backward (backward is $\sim\!2\times$ forward), whereas EquiformerV2 $L{=}4$ has a near-symmetric profile ($1.09$\,s forward, $1.32$\,s backward, $1.21\times$); MACE $L{=}2$ is essentially symmetric ($1.22$\,s\,/\,$1.20$\,s). The autograd graph through the dual-track scaffold---not the geometric-product dispatch itself, which the optimization in Supplementary~\ref{supp:optim} addressed---dominates the residual training cost. The trade-off between training cost and accuracy per parameter (where CliffordSTF is competitive with strong tensor-field baselines on OC22) is discussed in \S\ref{sec:limitations}.

\begin{table}[h]
\centering
\caption{\textbf{Wall-clock at matched parameter budget}, rMD17 aspirin, batch 32, single GPU. Time columns are mean $\pm$ standard deviation over 20 timed steps following 10 warmup steps; samples-per-second columns are derived from the means. Lower is faster for the time columns; higher is faster for the samples-per-second columns. All models fall within $\pm 50$\% of the $10^6$-parameter target. The Clifford and CliffordSTF (ours) entries reflect the optimized PyTorch dispatch described in Supplementary~\ref{supp:optim}; bit-for-bit preservation of model outputs is verified by O(3)-equivariance tests.}
\label{tab:wallclock}
\small
\begin{tabular}{lrrrrr}
\toprule
Model & Params & Train step (s) & Train (s/s) & Infer step (s) & Infer (s/s) \\
\midrule
PaiNN              & 1{,}043{,}856 & 0.79 $\pm$ 0.05 & 1{,}145 & 0.028 $\pm$ 0.003 & 3{,}575 \\
SchNet             &   984{,}769 & 1.16 $\pm$ 0.02 &   773 & 0.069 $\pm$ 0.011 & 1{,}456 \\
FAENet             & 1{,}027{,}745 & 1.69 $\pm$ 0.03 &   533 & 0.111 $\pm$ 0.004 &   900 \\
GotenNet           & 1{,}057{,}729 & 1.81 $\pm$ 0.00 &   497 & 0.064 $\pm$ 0.005 & 1{,}571 \\
MACE $L{=}1$       & 1{,}006{,}014 & 1.81 $\pm$ 0.29 &   497 & 0.160 $\pm$ 0.008 &   624 \\
EquiformerV2 $L{=}2$ &   891{,}560 & 2.45 $\pm$ 0.01 &   367 & 0.131 $\pm$ 0.001 &   764 \\
EquiformerV2 $L{=}3$ & 1{,}041{,}304 & 2.45 $\pm$ 0.01 &   367 & 0.135 $\pm$ 0.003 &   741 \\
EquiformerV2 $L{=}4$ & 1{,}043{,}222 & 2.48 $\pm$ 0.00 &   363 & 0.142 $\pm$ 0.008 &   702 \\
MACE $L{=}2$       & 1{,}011{,}683 & 2.53 $\pm$ 0.09 &   355 & 0.237 $\pm$ 0.002 &   422 \\
TorchMD-NET        &   968{,}242 & 2.58 $\pm$ 0.02 &   349 & 0.186 $\pm$ 0.029 &   538 \\
NequIP             & 1{,}002{,}436 & 3.06 $\pm$ 0.04 &   295 & 0.221 $\pm$ 0.007 &   452 \\
Clifford (ours)    & 1{,}031{,}926 & 3.11 $\pm$ 0.01 &   289 & 0.128 $\pm$ 0.000 &   781 \\
ViSNet             &   957{,}606 & 3.21 $\pm$ 0.07 &   280 & 0.235 $\pm$ 0.001 &   425 \\
MACE $L{=}3$       &   980{,}092 & 3.60 $\pm$ 0.16 &   250 & 0.261 $\pm$ 0.009 &   384 \\
ICTP $L{=}1$       & 1{,}039{,}554 & 4.50 $\pm$ 0.11 &   200 & 0.302 $\pm$ 0.052 &   331 \\
ICTP $L{=}2$       &   997{,}368 & 5.38 $\pm$ 0.10 &   167 & 0.364 $\pm$ 0.007 &   275 \\
DimeNet++          & 1{,}064{,}070 & 5.84 $\pm$ 0.11 &   154 & 0.326 $\pm$ 0.002 &   307 \\
ICTP $L{=}3$       &   982{,}752 & 6.23 $\pm$ 0.02 &   144 & 0.356 $\pm$ 0.006 &   281 \\
CliffordSTF (ours) & 1{,}056{,}534 & 7.77 $\pm$ 0.11 &   116 & 0.352 $\pm$ 0.001 &   284 \\
\bottomrule
\end{tabular}
\end{table}

\subsection{Implementation optimizations}\label{supp:optim}

The Clifford-family rows in Table~\ref{tab:wallclock} reflect a PyTorch-level refactor of the geometric-product dispatch, with no custom CUDA or Triton kernels written. The four largest changes are: (i) the Cl(3,0) geometric product is computed as a single \texttt{einsum} against a precomputed Cayley structure tensor, replacing 33 hand-unrolled scalar fused-multiply-adds that incurred one CUDA launch each in eager mode; (ii) per-layer \texttt{batch.max().item()} synchronization points were eliminated by threading the graph count through the forward pass as a keyword argument, allowing the GPU to run asynchronously across the full forward; (iii) layer-independent geometric invariants in the adaptive-routing module (coordination number, angular variance, mean edge distance) are precomputed once per forward rather than per layer; and (iv) validation-metric accumulation was moved on-device, with a single host transfer per epoch. All four changes preserve model outputs bit-for-bit, verified by O(3)-equivariance tests at fp32 precision.

Two alternative optimizations were investigated and rolled back. Engaging \texttt{torch.compile} required wrapping \texttt{torch\_scatter.scatter\_softmax} in a \texttt{@torch.compiler.disable} helper (Dynamo cannot FX-trace the custom op); even with the wrapper, Inductor recompiled subgraphs on every edge-count change under the dynamic-shape configuration required by \texttt{radius\_graph}, producing exponential recompile thrashing rather than forward progress. A second attempt rewrote the STF$_2$ and STF$_3$ products as outer-product einsums with symmetrize-then-extract: the forward pass became $\sim\!1\%$ faster, but the backward pass slowed by $\sim\!4\%$ because autograd retained the larger $3\times 3\times 3$ intermediate rather than the chain of small scalar tensors produced by the unrolled form. Both attempts left dormant scaffolding (a \texttt{use\_compile} configuration flag and the \texttt{@torch.compiler.disable} marker) for future revisits when upstream blockers are resolved.

Approximately $40$--$60\%$ additional speedup over the current measurements appears tractable through full \texttt{torch.compile} engagement (pending an upstream fix for \texttt{torch\_scatter} custom-op tracing or migration to \texttt{torch.scatter\_reduce\_}), CUDA Graphs for static-shape batches, and fused Triton kernels for the STF$_2$/STF$_3$ products and the scatter--softmax--multiply pattern. We additionally fixed an unrelated shape-transpose bug in our PaiNN force-head wrapper that was introduced during this optimization pass; the bug caused the model to crash rather than return incorrect outputs, so no published PaiNN result in this paper or its supplementary tables was produced from the buggy state.


\end{document}